\newtheorem{proposition}{Proposition}
\newtheorem{definition}{Definition}
\numberwithin{equation}{section}
\DeclareMathSymbol{:}{\mathord}{operators}{"3A}
\newcommand{\rt}{\mathbin{:}} % for choice 4
\newcommand{\eq}{\begin{equation*}}
\newcommand{\en}{\end{equation*}}
\newcommand{\eqa}{\begin{eqnarray*}}
\newcommand{\ena}{\end{eqnarray*}}
\newcommand{\eqn}{\begin{equation}}
\newcommand{\enn}{\end{equation}}
\newcommand{\be}{\begin{equation}}
\newcommand{\ee}{\end{equation}}
\newcommand{\eqan}{\begin{eqnarray}}
\newcommand{\enan}{\end{eqnarray}}
\newcommand{\nn}{\nonumber}
\newcommand{\vv}{ {\bf v} }
\newcommand{\pmat}{\begin{pmatrix}}
\newcommand{\pman}{\end{pmatrix}}
\title{Estimating the Unique Information \\ of Continuous  Variables}
\author{%
Ari Pakman 
\\
Columbia University
\And 
Amin Nejatbakhsh
\\
Columbia University
\And 
Dar Gilboa 
\\
Harvard University
\And 
Abdullah Makkeh
\\
Georg August University
\And 
Luca Mazzucato
\\
University of Oregon
\And 
Michael Wibral
\\
Georg August University
\And 
Elad Schneidman 
\\
Weizmann Institute
  % examples of more authors
  % \And
  % Coauthor \\
  % Affiliation \\
  % Address \\
  % \texttt{email} \\
  % \AND
  % Coauthor \\
  % Affiliation \\
  % Address \\
  % \texttt{email} \\
  % \And
  % Coauthor \\
  % Affiliation \\
  % Address \\
  % \texttt{email} \\
  % \And
  % Coauthor \\
  % Affiliation \\
  % Address \\
  % \texttt{email} \\
}
\begin{document}

\maketitle
\begin{abstract}
% A basic structural motive of neural systems involves 
% the integration of information from multiple sources into single 
% targets (e.g. several presynaptic afferents)
% as well as single sources that feed multiple targets (e.g. population coding).  
The integration and transfer of information from multiple sources to multiple targets is a core  motive of neural systems. 
The emerging field of partial information decomposition~(PID) provides a novel information-theoretic lens into these mechanisms by identifying synergistic, redundant, and unique contributions to the mutual information between one and several variables. While many works have studied 
aspects of PID for Gaussian and discrete distributions, 
the case of general continuous distributions is still uncharted territory. In this work we present a method for estimating 
the unique information in continuous distributions, 
for the case of one versus two variables.
Our method solves the associated optimization problem over the space of distributions with fixed bivariate  marginals by combining copula decompositions and techniques developed to optimize variational autoencoders. We obtain excellent agreement with known 
analytic results for Gaussians, and  illustrate the power of our new approach in several brain-inspired neural models. Our method is capable of recovering the effective connectivity of a chaotic network of rate neurons, and uncovers a complex trade-off between redundancy, synergy and unique information in recurrent networks trained to solve a generalized XOR~task.
\end{abstract}

\section{Introduction and background}

In neural systems, often multiple neurons are driven by one external event or stimulus; conversely multiple neural inputs can converge onto a single neuron. 
A natural question in both cases is how multiple variables hold information about the singleton variable. In their seminal work~\cite{williams2010nonnegative}, Williams and Beer
proposed an axiomatic extension of classic information theory to decompose the mutual information between multiple  source variables and a single target variable in a meaningful way. 
For the case of two sources $X_1,X_2$, 
their partial information decomposition (PID)
amounts to expressing the mutual information 
of $X_1,X_2$ with a target $Y$ as 
a sum of four non-negative 
terms,
% \begin{eqnarray}
% \label{eq:consistency1}
% \begin{aligned}
%     & I(Y \rt (X_1,X_2)) =
% U(Y \rt X_1 \backslash X_2) + U(Y \rt X_2 \backslash X_1)
% \\
% & 
% \,\,\,\,\,\,\,\,\,\,\,\,\,\,\,\,
% + R(Y \rt (X_1,X_2)) + S(Y \rt (X_1,X_2)) \,,
% \end{aligned}
% \end{eqnarray}
{\small 
\begin{eqnarray}
\label{eq:consistency1}
I(Y \rt (X_1,X_2)) =
U(Y \rt X_1 \backslash X_2) + U(Y \rt X_2 \backslash X_1)
+ R(Y \rt (X_1,X_2)) + S(Y \rt (X_1,X_2)) \,,
\end{eqnarray}
}
\!\! corresponding to unique ($U_1$, $U_2$), redundant~($R$) and synergistic~($S$)
contributions, respectively.  
These terms should also obey the consistency equations
\begin{eqnarray}
I(Y \rt X_1) = R(Y \rt (X_1,X_2))+ U(Y \rt X_1 \backslash X_2) \,,
\label{eq:consistency2}
\\
I(Y \rt X_2) = R(Y \rt (X_1,X_2))+ U(Y \rt X_2 \backslash X_1) \,.
\label{eq:consistency3}
\end{eqnarray}
The PID has proved useful
in understanding information processing by distributed systems in a diverse array of fields including machine learning~\cite{tax2017partial,wollstadt2021rigorous}, earth science~\cite{goodwell2020debates} and cellular automata~\cite{flecker2011partial},
and particularly in neuroscience~\cite{wibral2015bits,timme2016high,wibral2017partial,pica2017quantifying, kay2019bayesian}, where notions of synergy 
and redundancy, traditionally considered mutually exclusive and distinguished by the sign of
\eqan 
\nn
\Delta &=&  I(Y \rt (X_1,X_2)) -  I(Y \rt X_1) -  I(Y \rt X_2) \,,
\\
&=& S(Y \rt (X_1,X_2)) - R(Y \rt (X_1,X_2)) \,,
\label{eq:delta}
\enan 
have long played a central role 
in the quest to understand how neural circuits  
integrate information from multiple 
sources~\cite{gat1999synergy,brenner2000synergy,schneidman2003synergy,quiroga2009extracting}. The novelty of the PID framework here is  in separating the measures of synergy and redundancy in~(\ref{eq:delta}).
% offers a fresh 
% perspective on this pursuit, by 
% thus giving bounds on the amounts of information that are unique, redundant, and synergistic.  

The above abstract formulation of PID  provides three equations for four unknowns, and only becomes operational once one of $U_1$, $U_2$, $R$, or $S$ is defined. This has been done in \cite{bertschinger2014quantifying} via a definition of the unique information:
\vskip .3cm 
\begin{definition}[BROJA~\cite{bertschinger2014quantifying}]
\label{def:broja}
Given three random variables $(Y,X_1,X_2)$
with joint probability density~$p(y,x_1, x_2)$,
the unique information $U_1$ of $X_1$ with respect to $Y$ is 
\eqan 
U(Y \rt X_1 \backslash X_2) &=& \min_{q \in Q} I_{q}(Y \rt X_1|X_2) \,,
\label{eq:unique}
\\
&=& \min_{q \in Q} 
\int dy dx_1 dx_2  \, q(y,x_1,x_2) 
\log\left( \frac{q(y,x_1|x_2)}{q(y|x_2) q(x_1|x_2)} \right) \,,
% \mathbb{E}_q
% \log\left( \frac{q(y,x_1|x_2)}{q(y|x_2) q(x_1|x_2)} \right) \,,
\label{eq:unique2}
\enan
where % the expectation is w.r.t. a distribution $q(y,x_1,x_2) \in Q$ with
\eqan 
Q = \{ q(y,x_1, x_2) \, | \, q(y,x_i) = p(y,x_i), i=1,2 \} \,.
\label{eq:Q_family}
\enan 
\end{definition}

In words, we minimize the conditional mutual information $I(Y \rt X_1|X_2)$ 
over the space of density functions that preserve the marginal densities $p(y,x_1)$ and $p(y,x_2)$. The above definition implies, along with  (\ref{eq:consistency2})-(\ref{eq:consistency3}), 
that~the unique and redundant information only depend on the marginals $p(y,x_1), p(y,x_2)$, and that the  synergy can only be estimated from the full $p(y,x_1,x_2)$.

The original definition in \cite{bertschinger2014quantifying} was limited to discrete random variables. Here, we show that the extension to continuous variables is well-defined and can be practically estimated.

{\bf Motivation from decision theory}~\cite{bertschinger2014quantifying}.
% The above \cref{def:broja}  has a natural motivation~\cite{bertschinger2014quantifying}.
Consider for simplicity discrete variables.
A  decision maker DM$_1$ 
 can choose an action $a$  from a finite  set ${\cal A}$,
and receives a reward~$u(a,y)$ based on the selected 
action and the state $y$, which occurs with probability $p(y)$.
Notably, DM$_1$ has no knowledge of~$y$, but observes instead a 
random signal $x_1$  sampled from $p(x_1|y)$. 
Choosing the action maximizing the expected reward for each $x_1$,
his maximal expected reward is 
\eqan 
R_1 = \sum_{x_1} p(x_1) \max_{a|x_1} \sum_{y}p(y|x_1) u(a,y) \,.
\enan 
DM$_1$  is said to have no unique information about $y$ w.r.t. 
another decision maker DM$_2$ that observes $x_2 \sim p(x_2|y)$  -- if $R_2 \geq R_1$ for any
set ${\cal A}$, any distribution $p(y)$, and any reward function $u(a,y)$. 
A celebrated theorem by Blackwell~\cite{blackwell1951comparison,leshno1992elementary} states that such a generic advantage 
by DM$_2$ occurs iff there exist a 
stochastic matrix $q(x_1| x_2)$ which  satisfies%\footnote{Interpreting $p(x_i|y),  (i=1,2)$ as noisy communication channels, a similar partial ordering  was introduced by Shannon~\cite{shannon1958note,nasser2017characterization}.
% }
\eqan 
p(x_1|y) = \sum_{x_2} p(x_2|y) q(x_1| x_2)\,.
\label{eq:blackwell}
\enan 
But this occurs precisely when the unique information (\ref{eq:unique})
vanishes, since then there exists a joint distribution $q(y,x_1,x_2)$ in $Q$ for which $y \perp x_1 | x_2$, which implies
$q(x_1| x_2,y)=q(x_1| x_2)$, and thus~(\ref{eq:blackwell}) holds.
Similar results exist for continuous variables~\cite{torgersen1991comparison,le1996comparison}.
Thus the unique information from~\cref{def:broja} quantifies a departure from Blackwell's relation~(\ref{eq:blackwell}).

In this work we present a definition and a
method to estimate the BROJA unique information
for generic continuous probability densities.
Our approach is based on the 
observation that the constraints~(\ref{eq:Q_family})
can be satisfied  
with an appropriate copula parametrization,
%of the optimization space,
and makes use of techniques developed to optimize variational autoencoders.
We only consider one-dimensional $Y, X_1, X_2$ 
for simplicity, but the method 
can be naturally extended 
to  higher dimensional cases.
In \cref{sec:related_works} we review related works, in \cref{sec:estimation} we present our method and~\cref{sec:examples} contains several illustrative examples.

\section{Related works}
\label{sec:related_works}
% Partial information decomposition is not covered by classic information theory as the latter lacks an axiom to uniquely define a PID, even for the case of only two source variables~\cite{williams2010nonnegative}. Thus, from a mathematical perspective a so-called 'functional definition' has to be made, which means that the consequences of that definition should align with our intuitive notion of what results a PID should provide in certain corner cases. However, as shown in \cite{bertschinger2012shared}, not all intuitively desirable properties of a PID can be realized simultaneously. While this initially created some confusion,
% the PID field has since acknowledged that different desirable properties are important 
% in distinct application scenarios. Thus, various proposals for PID measures are not seen as conflicting but as having different operational interpretations. 
Partial information decomposition offers a solution to a repeated question that was not addressed by `classical’ information theory regarding the relations between two sources and a target~\cite{williams2010nonnegative}. From a mathematical perspective a 'functional definition' has to be made, meaning that such a definition should align with our intuitive notions. Yet, as shown in \cite{bertschinger2012shared}, not all intuitively desirable properties of a PID can be realized simultaneously. Thus, different desirable properties are chosen for distinct application scenarios. Thus, various proposals for decomposition measures are not seen as conflicting but as having different operational interpretations.
For example, the BROJA approach used here builds on desiderata from decision theory, while other approaches appeal to game theory~\cite{ince2017measuring} or the framework of Kelly gambling~\cite{finn2018pointwise}. Yet other approaches use arguments from information geometry~\cite{harder2013bivariate}. 
 Other approaches assume agents receiving potentially conflicting or incomplete information about the source variables for the purpose of inference or decryption 
(see e.g.~\cite{rauh2017secret,makkeh2021differentiable}). In ~\cite{gutknecht2021bits} the authors separate the specific operational interpretations of PID measures from the general structure of information decomposition.

The actual computation of the BROJA unique information is non-trivial,
even for discrete variables. 
Optimization methods exist for the latter case~\cite{banerjee2018computing,makkeh2017bivariate,makkeh2018broja}, and analytic solutions are only known when all the variables are univariate binary~\cite{rauh2019properties}. 
For continuous  probability densities, 
an earlier definition aligned with the BROJA measure was made by Barret~\cite{barrett2015exploration}, but only applies to Gaussian variables. For Barret's measure, an analytic solution is known when $p(y,x_1,x_2)$ is a three-dimensional Gaussian density~\cite{barrett2015exploration}, but does not generalize to higher dimensional Gaussians~\cite{schamberg2021partial}. 

% We note that the optimization of information theoretic functionals 
% of probability densities while keeping 
% fixed marginals appears also in 
% the Schroedinger~\cite{leonard2013survey} and optimal transport~\cite{peyre2019computational} problems. 
% Several recent works proposed methods to estimate mutual information 
% using neural networks~\cite{belghazi2018mutual,poole2019variational}. 

% PID has been previously applied to {\it in vitro} neural data to elucidate how single neuron computational properties depend on network topology ~\cite{timme2016high}.

% In Section~\ref{sec:one_dim} we consider one-dimensional random 
% variables, and 
% in Section~\ref{sec:high_dim} we generalize the method to $X_i \in \mathbb{R}^{n_i}$

\section{Bounding and estimating the unique information}
\label{sec:estimation}
We proceed in two steps. We first 
introduce a parametrization of the optimization space $Q$ in (\ref{eq:Q_family})
and then introduce and optimize an upper bound on the unique information.

\subsection{Parametrizing the optimization space with copulas}
To characterize the optimization space $Q$ in 
(\ref{eq:unique})-(\ref{eq:Q_family}), it is convenient to 
recall that according to Sklar's theorem~\cite{sklar1959fonctions}, any $n$-variate probability density can be expressed as
\eqan 
p(x_1 \ldots x_n) = p(x_1) \ldots p(x_n) c(u_1 \ldots u_n) \,,
\label{eq:sklar}
\enan 
where $p(x_i)$ is the marginal and  $u_i= F(x_i)$ is the 
CDF of each variable. 
The dependency structure among the variables 
is encoded in the function 
$c: [0,1]^n \rightarrow [0,1]$. This  is a {\it copula} density, 
a probability density on the unit hypercube with uniform marginals~\cite{Joe1997-pk},
\eqan 
\underset{[0,1]^{n-1}}{\int}
  \prod_{j=1, j\neq i}^n du_j \, c(u_1 \ldots u_n)= 1 \quad  \forall i  \,.
% \prod_{\underset{j\neq i}{j=1} }^n du_j \, c(u_1 \ldots u_n)= 1 \quad  \forall i  \,.
\enan 
Note that under univariate 
reparametrizations $z_i'=g(z_i)$, the 
$u_i$'s and the copula $c$ remain invariant.
For an overview of copulas in 
machine learning, see~\cite{elidan2013copulas}.

\begin{proposition}
Under the BROJA~\cref{def:broja} of  unique information,
all the terms of the partial information decomposition in (\ref{eq:consistency1})-(\ref{eq:consistency3}) 
are independent of the univariate marginals $p(x_1),p(x_2),p(y)$,
and only depend on the copula $c(u_y,u_1,u_2)$.
\end{proposition}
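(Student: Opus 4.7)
The plan is to show that a monotone change of variables to $(U_Y,U_1,U_2)=(F_Y(Y),F_{X_1}(X_1),F_{X_2}(X_2))$ reduces every quantity appearing in the PID to a functional of the copula $c$ alone, and that the minimization domain $Q$ is likewise mapped bijectively to a domain determined purely by $c$.

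First I would observe that for any $q\in Q$, the bivariate marginal constraints $q(y,x_i)=p(y,x_i)$ force $q(y)=p(y)$ and $q(x_i)=p(x_i)$. Hence the CDFs $F_Y,F_{X_1},F_{X_2}$ built from $p$ are simultaneously the CDFs of every candidate $q$. Applying Sklar's theorem to $q$ then yields
\begin{equation*}
q(y,x_1,x_2)=p(y)p(x_1)p(x_2)\,c_q(u_y,u_1,u_2),
\end{equation*}
with the pairwise marginals of $c_q$ equal to the bivariate copulas $c_{y1},c_{y2}$ extracted from $p(y,x_1),p(y,x_2)$ (i.e.\ from the pairwise marginals of $c$). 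This establishes a bijection between $Q$ and the set $\widetilde{Q}$ of three-variate copulas on $[0,1]^3$ whose two relevant bivariate marginals coincide with $c_{y1}$ and $c_{y2}$ — a set specified entirely by $c$.

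Next I would invoke the invariance of (conditional) mutual information under smooth invertible univariate reparametrizations: the Jacobians of $F_Y,F_{X_1},F_{X_2}$ enter the log-ratio in (\ref{eq:unique2}) only through the marginals, which cancel identically. Thus $I_q(Y\rt X_1|X_2)=I_{c_q}(U_Y\rt U_1|U_2)$, and the minimization in \cref{def:broja} becomes
\begin{equation*}
U(Y\rt X_1\backslash X_2)=\min_{c_q\in \widetilde{Q}}\,I_{c_q}(U_Y\rt U_1|U_2),
\end{equation*}
whose value depends only on $c$ (through $\widetilde{Q}$). The same invariance argument shows that $I(Y\rt X_1)$, $I(Y\rt X_2)$, and $I(Y\rt(X_1,X_2))$ are functionals of $c$ alone. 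Finally, feeding these facts into the consistency equations (\ref{eq:consistency2})–(\ref{eq:consistency3}) yields that $R$ and $U_2$ depend only on $c$, and then (\ref{eq:consistency1}) forces $S$ to depend only on $c$ as well.

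The only subtle point is regularity of the CDF transformation when the densities are not strictly positive, but this can be handled in standard fashion by passing to the generalized inverse or by restricting to the support; there is no real obstacle, since (conditional) mutual information is coordinate-invariant under measurable bijections between supports. Everything else is bookkeeping around Sklar's theorem and the constraint set.
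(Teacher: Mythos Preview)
Your proposal is correct and follows essentially the same route as the paper: apply Sklar's decomposition to rewrite the conditional mutual information in~(\ref{eq:unique2}) purely in terms of the trivariate copula, note that the constraint set $Q$ is determined by the bivariate copulas $c(u_y,u_i)$, and then conclude via the consistency equations that all PID terms are marginal-free. Your version is slightly more explicit than the paper's in spelling out the bijection $Q\leftrightarrow\widetilde{Q}$ and in flagging the regularity caveat for non-strictly-positive densities, but the underlying argument is the same.
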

% \vskip -.2cm
\begin{proof}
Expressing $q(y,x_1,x_2), q(x_1,x_2), q(y,x_2)$ 
via copula decompositions~(\ref{eq:sklar}), 
and changing variables as $du_y = q(y)dy$, etc.,
the objective function in~(\ref{eq:unique2}) becomes
\eqan 
I_{q}(Y \rt X_1|X_2) &=& 
% \int dx_1 dx_2 dy \, q(y,x_1,x_2) 
% \log\left( \frac{q(y,x_1|x_2)}{q(y|x_2) q(x_1|x_2)} \right) \,,
% \\
% &=& 
\underset{[0,1]^3}{\int}
du_y du_1 du_2 \, c(u_y,u_1,u_2) \log 
\left(
\frac{c(u_y,u_1,u_2)}{ c(u_y, u_2)  c(u_1,u_2)}
\right) \,.
\label{eq:I_q}
\enan 
Note that the copula of any marginal distribution is the marginal of the copula:
\eqan 
c(u_y, u_2) = \underset{[0,1]}{\int} du_1 \, c(u_y,u_1,u_2) \,,
\qquad \qquad 
c(u_1, u_2) = \underset{[0,1]}{\int} du_y \, c(u_y,u_1,u_2) \,.
\enan 
Thus the optimization objective and
the unique information are independent of the univariate marginals. 
A similar result holds for  the  mutual information terms in 
the l.h.s. of (\ref{eq:consistency1})-(\ref{eq:consistency3}).\footnote{The connection between mutual information and copulas 
was discussed in ~\cite{calsaverini2009information, ma2011mutual}.} 
It follows that none of the PID terms in (\ref{eq:consistency1})-(\ref{eq:consistency3}) depend on the 
univariate marginals, and therefore all the PID terms  
are invariant under univariate  reparametrizations of~$(y,x_1,x_2)$. 
\end{proof}

In order to parametrize the 
optimization space $Q$ in (\ref{eq:Q_family}) using copulas, consider the factorization
\eqan
p(y,x_1, x_2) = p(x_1) p(y|x_1) p(x_2|y,x_1) \,.
\label{eq:full_density}
\enan 
Using  the copula decomposition (\ref{eq:sklar}) for $n=2$, the last two factors in~(\ref{eq:full_density}) can be expressed as 
\eqan 
p(y|x_1) &=& \frac{p(y,x_1)}{p(x_1)} = 
 \frac{p(y) p(x_1) c(y,x_1)}{p(x_1)}
 =
c(u_y, u_1) p(y) \,,
\label{eq:exp1}
\enan 
and similarly
\eqan 
p(x_2|y,x_1) &=& \frac{p(x_1, x_2|y)}{p(x_1|y)} \,,
\\
&=& 
c_{1,2|y}(u_{1|y}, u_{2|y} ) p(x_2|y) \,,
\label{eq:exp2}
\\
 &=& c_{1,2|y}(u_{1|y}, u_{2|y} ) c(u_y, x_{2}) p(x_2) \,,
\label{eq:exp3}
\enan 
where we defined the conditional CDFs, %obtained as 
\eqan 
u_{i|y} =F(u_i|u_y) = \frac{\partial C(u_y,u_i)}{\partial u_y }  \qquad i=1,2
\enan 
and $C(u_y,u_i)$ is the CDF of $c(u_y,u_i)$. 
Note that the function $c_{1,2|y}(u_{1|y}, u_{2|y} )$ in (\ref{eq:exp2}) is
not the conditional copula $c(u_{1}, u_{2}|u_y)$, but rather the copula of the conditional $p(x_1, x_2|y)$.
Using expressions (\ref{eq:exp1}) and (\ref{eq:exp3}),
the full density (\ref{eq:full_density})  becomes
\eqan
p(y,x_1, x_2) &=& 
p(y) p(x_1) p(x_2)  
c(u_y,u_1,u_2) \,, 
\label{eq:full_p}
\enan 
where 
\begin{eqnarray}
\label{eq:full_c}
\begin{aligned}
c(u_y,u_1,u_2) & = c(u_y, u_1)\, c(u_y, u_2)  c_{1,2|y}(u_{1|y}, u_{2|y} ) \,.
\end{aligned}
\end{eqnarray}
This is a simple case of the 
pair-copula construction of multivariate distributions~\cite{bedford2001probability,aas2009pair,czado2010pair}, which allows to expand any $n$-variate copula as a product
of (conditional) bivariate copulas. 

% The  functions~$p(y), p(x_1), p(x_2)$ and $c_{y,1}, c_{y,2}, c_{1, 2 | y}$
% can be modeled with  parametric functions and 
% estimated from the data via maximum likelihood
% but, as we will see below, only the two copulas 
% $c_{y,1}, c_{y,2}$ are needed in order to estimate the unique information. 

\vskip .3cm

\begin{proposition}
The copula of the conditional, $c_{1,2|y}(\cdot, \cdot)$,
parametrizes the space 
$Q$ in (\ref{eq:Q_family}). 
% The distributions $Q$ in (\ref{eq:Q_family}) are parametrized by the copula of the conditional~$c_{1,2|y}(\cdot, \cdot )$. 
\end{proposition}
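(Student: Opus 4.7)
The plan is to establish a two-way correspondence: every $q \in Q$ determines a unique bivariate copula $c_{1,2|y}$ via the pair-copula construction~(\ref{eq:full_c}), and conversely every choice of such a copula yields a valid density in $Q$. Showing the former reduces to observing that the constraint $q(y,x_i)=p(y,x_i)$ for $i=1,2$ completely fixes the univariate marginals $p(y),p(x_1),p(x_2)$ and, by Sklar's theorem, the bivariate copulas $c(u_y,u_1)$ and $c(u_y,u_2)$; hence in~(\ref{eq:full_p})--(\ref{eq:full_c}) every factor is already pinned down by the marginal constraints \emph{except} $c_{1,2|y}$, which therefore carries all the remaining freedom in $q$.

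The substantive direction is the converse: starting from an arbitrary bivariate copula $c_{1,2|y}$ on $[0,1]^2$, I have to verify that the density $q(y,x_1,x_2)$ built from~(\ref{eq:full_p})--(\ref{eq:full_c}) is non-negative, integrates to one, and reproduces both required bivariate marginals. Non-negativity is immediate because each factor is non-negative. Normalization and the marginal conditions will follow from a single calculation: I would compute $\int dx_2\,q(y,x_1,x_2)$ by changing variables $u_2 = F(x_2)$, so that $p(x_2)\,dx_2 = du_2$, giving
\begin{eqnarray*}
\int dx_2\,q(y,x_1,x_2) = p(y)\,p(x_1)\,c(u_y,u_1)\underset{[0,1]}{\int} du_2\, c(u_y,u_2)\, c_{1,2|y}(u_{1|y}, u_{2|y}).
\end{eqnarray*}
At fixed $u_y$, the map $u_2 \mapsto u_{2|y}$ is a monotone bijection of $[0,1]$ with Jacobian $\partial u_{2|y}/\partial u_2 = \partial^2 C(u_y,u_2)/\partial u_y\partial u_2 = c(u_y,u_2)$, so $du_{2|y} = c(u_y,u_2)\,du_2$. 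The integral then collapses to $\int du_{2|y}\, c_{1,2|y}(u_{1|y},u_{2|y}) = 1$, where I use the defining property that $c_{1,2|y}$ has uniform marginals. This leaves $p(y)p(x_1)c(u_y,u_1) = p(y,x_1)$ by Sklar, confirming $q(y,x_1)=p(y,x_1)$; the argument for $q(y,x_2)=p(y,x_2)$ is symmetric, and integrating once more yields normalization.

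The main obstacle, though not deep, is the careful bookkeeping of the change of variables $du_{2|y} = c(u_y,u_2)\,du_2$ at fixed $u_y$: one must treat $u_{2|y}$ as a one-parameter family of maps indexed by $u_y$, and recognize that the role of the factor $c(u_y,u_2)$ in~(\ref{eq:full_c}) is precisely to supply this Jacobian, so that the uniform-marginal property of the bivariate copula $c_{1,2|y}$ can be invoked cleanly. Once this identification is made, both directions are immediate, and $c_{1,2|y}$ is exhibited as a free parameter ranging over all bivariate copulas that parametrizes $Q$.
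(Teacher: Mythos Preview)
Your argument is correct and follows the same pair-copula route as the paper. The paper's own proof is terser: it only observes that the marginal constraints $q(y,x_i)=p(y,x_i)$ fix $c(u_y,u_i)$ for $i=1,2$, so in the decomposition~(\ref{eq:full_c}) only the last factor can vary, and then immediately passes to a parametrization $\theta(u_y)$. You go further by explicitly verifying the converse---that any choice of $c_{1,2|y}$ produces a bona fide density in $Q$---via the change of variables $du_{2|y}=c(u_y,u_2)\,du_2$ and the uniform-marginal property; this is a genuine addition of rigor that the paper leaves implicit. One point the paper stresses and you underplay: $c_{1,2|y}$ is not a single bivariate copula but a $u_y$-indexed family (hence the paper's $\theta(u_y)$), so your phrase ``ranging over all bivariate copulas'' should read ``ranging over all measurable families $u_y\mapsto c_{1,2|y}(\cdot,\cdot)$ of bivariate copulas''; your verification already works pointwise in $u_y$, so this is only a matter of wording.
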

\begin{proof}
Since $q(y,x_i) = p(y,x_i)$ $(i=1,2)$, the copula factors in
\eqan 
p(y,x_i) = p(y) p(x_i) \, c(u_y, u_i) \,, \qquad i=1,2
% \\
% p(y,x_2) = p(y)  p(x_2) \,  c(u_y, u_2) \,,
\enan
are fixed in $Q$. 
Therefore, in the copula decomposition (\ref{eq:full_c}) for 
$q(y,x_1, x_2) \in Q$, only the last factor can vary in $Q$.
Let us denote by $\theta$ the parameters of a generic parametrization for the copula $c_{1,2|y}(u_{1|y}, u_{2|y} )$. 
Since the latter is conditioned
on $u_y$, the parameters can be 
taken as a function~$\theta(u_y)$. It follows that the copula of $q$ necessarily has the form 
\begin{eqnarray}
\label{eq:q_expansion}
\begin{aligned}
c_{\theta}(u_y,u_1,u_2) & =c(u_y, u_1) \, c(u_y, u_2) \,
  c_{1,2|\theta(u_y)}(u_{1|y}, u_{2|y}) \,,
\end{aligned}
\end{eqnarray}
and the parameters of the function 
$\theta(u_y)$ 
are the optimization variables.\footnote{We note that 
in multivariate pair-copula expansions 
it is common to assume constant conditioning parameters~$\theta$~\cite{nagler2016evading}, 
but we do not make such a simplifying  assumption. } 
\end{proof}

\subsection{Optimizing an upper bound} 
Inserting now the expression (\ref{eq:q_expansion}) into the objective function (\ref{eq:I_q})
we get
\eqan 
\label{eq:I_q_theta}
I[\theta] 
=
\mathbb{E}_{c_{\theta}(u_y,u_1,u_2)}
\log \left[ 
c(u_y, u_1) c_{1,2|\theta(u_y)}(u_{1|y}, u_{2|y})
\right] 
- \mathbb{E}_{c_{\theta}(u_1,u_2)}
\log  c_{\theta}(u_1,u_2) \,,
\enan 
which is our objective function and satisfies the marginal constraints (\ref{eq:Q_family}). 
Note that apart from the optimization parameters $\theta$,
it depends on the bivariate copulas 
$c(u_y, u_1)$ and $c(u_y, u_2)$ which should be estimated from the observed data. Given $D$ observations 
$\{ y^{(i)}, x_1^{(i)},x_2^{(i)})\}_{i=1}^D$,
we map each value to $[0,1]$
via the empirical CDFs of each coordinate $(y,x_1,x_2)$.
Computing the latter has a $O(D \log D)$ cost from sorting each coordinate 
and yields a data set $\{ u_y^{(i)}, u_1^{(i)},u_2^{(i)})\}_{i=1}^D$. 
The latter set is used 
to estimate copula densities $c(u_y, u_1)$ and $c(u_y, u_2)$ by fitting several  parametric and non-parametric copula 
models~\cite{Nelsen2007-ch}, 
and choosing the best pair of models using the AIC criterion.\footnote{For this fitting/model selection step, we used the 
\texttt{pyvinecopulib} python package~\cite{nagler_thomas_2020_4288293}.  }
From the learned  copulas 
we also get the conditional CDF functions  $u_{i|y}= F(u_i|u_y)$ 
that appear in the arguments of the first term in (\ref{eq:I_q_theta}).

%\subsection{A variational upper bound}
{\bf A variational upper bound. }
Minimizing (\ref{eq:I_q_theta}) directly 
w.r.t. $\theta$ is 
challenging because the second term 
depends on the copula marginal 
$c_{\theta}(u_1,u_2)$ which 
has no closed form, as it requires 
integrating (\ref{eq:q_expansion}) w.r.t. $u_y$. 
We  introduce instead an inference distribution $r_{\phi}(u_y|u_1,u_2)$, with parameters $\phi$, 
that approximates the conditional copula 
$c_{\theta}(u_y|u_1,u_2)$, and consider the bound 
\eqan 
\label{eq:lower_bound}
\log c_{\theta}(u_1,u_2) = \log \int du_y' \, c_{\theta}(u_y',u_1,u_2) 
\geq
\int du_y' \, r_{\phi}(u_y'|u_1,u_2)  \, \log \frac{c_{\theta}(u_y',u_1,u_2)}{r_{\phi}(u_y'|u_1,u_2)} \,,
\enan 
which follows from Jensen's inequality and is tight when 
$r_{\phi}(u_y'|u_1,u_2)= c_{\theta}(u_y'|u_1,u_2)$. 
This expression gives an upper bound on $I_q[\theta]$,
which can be minimized jointly w.r.t. $(\theta, \phi)$. 

A disadvantage  of the bound (\ref{eq:lower_bound})
is that its tightness depends strongly on 
the expressiveness of the inference distribution 
$r_{\phi}(u_y' |u_1,u_2)$. 
This situation can be improved by considering a 
multiple-sample generalization proposed by~\cite{burda2015importance},
\eqan 
\log c_{\theta}(u_1,u_2) \geq D_{A,\theta, \phi}(u_1,u_2) 
\equiv \mathbb{E}_{p(u_y^{(1)} \ldots u_y^{(A)}) } 
\log \left[
\frac{1}{A} \sum_{a=1}^A 
\frac{c_{\theta}(u_y^{(a)},u_1,u_2)}{r_{\phi}(u_y^{(a)}|u_1,u_2)}
\right] \,, 
\label{eq:Dk_leq}
\enan 
where the expectation is w.r.t. $A$ independent samples of $r_{\phi}(u_y' |u_1,u_2)$. 
$D_{A,\theta, \phi}(u_1,u_2)$ coincides with the lower bound in~(\ref{eq:lower_bound}) for $A=1$ and satisfies~\cite{burda2015importance}
\eqan 
% D_{K,\theta, \phi}(u_1,u_2) &\leq&  \log c_{\theta}(u_1,u_2) \quad \forall K,
% \\
D_{A+1,\theta, \phi}(u_1,u_2) &\geq & D_{A,\theta, \phi}(u_1,u_2),
\\
\lim_{A\rightarrow \infty} D_{A,\theta, \phi}(u_1,u_2) &=& \log c_{\theta}(u_1,u_2) \,.
\enan 
Thus, even when $r_{\phi}(u_y'|u_1,u_2)\neq c_{\theta}(u_y'|u_1,u_2)$,
the bound  can be made arbitrarily tight
for large enough~$A$. 
Inserting (\ref{eq:Dk_leq}) in (\ref{eq:I_q_theta}), we get finally
\eqan 
I_q[\theta] \leq 
B_1[\theta]  + B_2[\theta,\phi] \,, 
\label{eq:I_q_bound}
\enan 
where 
\eqan 
B_1[\theta] &=&
\mathbb{E}_{c_{\theta}(u_y,u_1,u_2)}
\log \left[ 
c(u_y, u_1) 
c_{1,2|\theta(u_y)}(u_{1|y}, u_{2|y})
\right] \,, 
\\
B_2[\theta,\phi]& = &
- \mathbb{E}_{c_{\theta}(u_1,u_2)}
 D_{A,\theta, \phi}(u_1,u_2) \,,
\enan 
and we minimize the r.h.s. of (\ref{eq:I_q_bound}) w.r.t. $(\theta,\phi)$.
Low-variance estimates of the gradients to perform the minimization can be obtained 
with the reparametrization trick~\cite{kingma2013auto,tucker2018doubly}, 
as discussed in detail in Appendix~\ref{app:gradients}. 
In our examples below  we use for $c_{1,2|\theta(u_y)}$
a bivariate Gaussian copula
 (reviewed in Appendix~\ref{app:gaussian_copula}).
Such a copula has just one parameter $\theta \in [-1,+1]$, and thus the optimization is  done 
over the space of functions 
$\theta(u_y):[0,1] \rightarrow [-1,+1]$,
which we parametrize with a two-layer neural network. 
Similarly,  we parametrize
$r_{\phi}(u_y|u_1,u_2)$ with a two-layer neural 
network. Details of these networks are in  Appendix~\ref{app:inference}.

While the term $B_2$ in our bound is similar to the negative of the ELBO bound 
in importance weighted autoencoders (IWAEs)~\cite{burda2015importance}, there are some differences
between the two settings, the most important being that we are 
interested in the precise value 
of the bound at the minimum, rather than the learned functions $c_{\theta}, r_{\phi}$. Note also that our latent variables $u_y^{(k)}$ are one-dimensional,
as opposed to the usual  higher dimensional latent distributions 
of variational autoencoders, and that 
the empirical expectation over data observations in IWAEs
is replaced in $B_2$  by the expectation over~$c_{\theta}(u_1,u_2)$, 
whose parameters are also optimized.

\begin{figure*}[t!]
	\begin{center}
		\fbox{		
		\includegraphics[width=.95\textwidth]{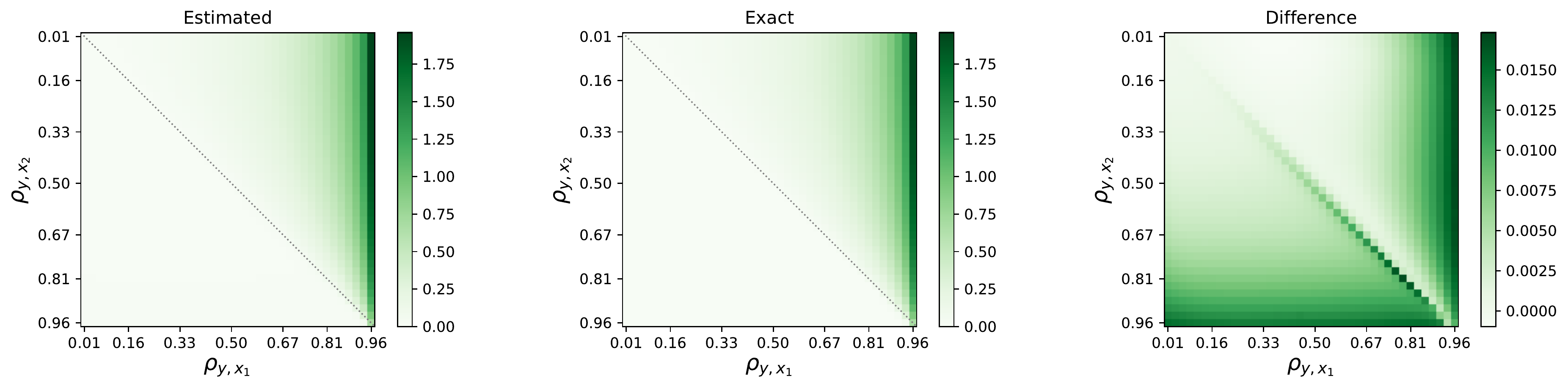}}
	\end{center}
	\caption{
	 {\bf Estimated vs. exact  values of unique information  for Gaussians.}
	 For a three-dimensional Gaussian, we show estimates of  $U(Y \rt X_1
	 \backslash X_2)$ 
	 as a function of the  correlations $\rho_{y,x_i} (i=1,2)$, compared with the exact results from~\cite{barrett2015exploration}. 
 Only for Gaussian distributions are exact results known for continuous variables.	 }  		\label{fig:comparison_Gaussian}	
\end{figure*}

{\bf Estimating the other  PID terms}
\label{sec:estimating_all}
In the following we adopt the minimal value taken by  the upper bound (\ref{eq:I_q_bound}) as our estimate of~$U_1$. The other terms in the partial information decomposition
are obtained 
from the consistency relations (\ref{eq:consistency1})-(\ref{eq:consistency3}), 
after estimating the mutual informations 
$I(Y \rt (X_1,X_2)), I(Y \rt X_1), I(Y \rt X_2)$.  
There are several methods for the latter. In our examples, we use the observed data to fit additional copulas $c(u_1,u_2)$ and $c_{12|\theta(u_y)}$ and 
estimate $I(Y \rt X_1) \simeq \frac{1}{D} \sum_{i=1}^D \log c(u_y^{(i)},u_1^{(i)})$
% \eqan
% I(Y \rt X_1) \simeq \frac{1}{D} \sum_{i=1}^D \log c(u_y^{(i)},u_1^{(i)}) \,,
% \enan 
and similarly for the other terms. Note that all 
our estimates  have  sources of potential bias.
Firstly, the estimation of the parametric copulas 
is subject to model or parameter misspecification,
which can be ameliorated by more refined model selection strategies.
Secondly, the optimized bound might not saturate, biasing the estimate upwards. This can be improved using higher $A$ values 
and improving the gradient-based optimizer used.

% \begin{algorithm*}[t!]
%     \caption{Estimation of the terms of the partial information decomposition (PID)}
% \begin{algorithmic}[1]

% \Procedure{Roy}{$a,b$}       \Comment{This is a test}
%     \State System Initialization
%     \State Read the value 
%     \If{$condition = True$}
%         \State Do this
%         \If{$Condition \geq 1$}
%         \State Do that
%         \ElsIf{$Condition \neq 5$}
%         \State Do another
%         \State Do that as well
%         \Else
%         \State Do otherwise
%         \EndIf
%     \EndIf

%     \While{$something \not= 0$}  \Comment{put some comments here}
%         \State $var1 \leftarrow var2$  \Comment{another comment}
%         \State $var3 \leftarrow var4$
%     \EndWhile  \label{roy's loop}
% \EndProcedure\end{algorithmic}
% \end{algorithm*}

\section{Examples}

\label{sec:examples}
% All our examples were implemented in pytorch.
{\bf Comparison with exact results for Gaussians.}
Consider a three-dimensional Gaussian  
with correlations $\rho_{y,x_i}$ between $y,x_i$ for $i=1,2$.
The exact solution to~(\ref{eq:unique}) in this case is~\cite{barrett2015exploration}
\eqan 
U(Y \rt X_1 \backslash X_2)  = \ensuremath{\frac{1}{2}\log\left(\frac{1-\rho_{y,x_{2}}^{2}}{1-\rho_{y,x_{1}}^{2}}\right)}\mathbbm{1}\left[\ensuremath{\rho_{y,x_{2}}<\rho_{y,x_{1}}}\right].
% \left\{
% \begin{tabular}{cl}
%     $\frac12 \log \left( \frac{1-\rho^2_{y,x_2}}{1-\rho^2_{y,x_1}}  \right)$ & \textrm{if} $\rho_{y,x_2} < \rho_{y,x_1}$
%     \\
%     0 &  \textrm{otherwise}.
% \end{tabular}
% \right.
% %\label{eq:exact_gauss}
\enan 
Fig.~\ref{fig:comparison_Gaussian} compares the above expression
with estimates from our method. 
Here we know that~$c_{y,1}$ and $c_{y,2}$
are Gaussian copulas, 
with parameters $\rho_{y,x_1},\rho_{y,x_2}$, 
and we assumed a Gaussian copula for 
$c_{1,2|y,\theta}(u_{1|y}, u_{2|y})$ as well. 
For each  pair of values $\rho_{y,x_1},\rho_{y,x_2}$.
In this and the rest of the  experiments, 
we optimized the parameters $(\theta,\phi)$ using the  ADAM algorithm~\cite{kingma2014adam}
with a fixed learning rate $10^{-2}$ during 1200 iterations,
and using $A=50$. The results reported correspond to the mean of the bound 
in the last 100 iterations. 
The comparison in Fig.~\ref{fig:comparison_Gaussian}
shows excellent agreement. 

% \subsection{Model systems of three neurons
% \label{subsec:simple_neural_models}}
{\bf Model systems of three neurons.}
The nature of information processing of neural systems
is a prominent area of application of 
the PID framework,
since synergy has been proposed as natural measure of information modification~\cite{lizier2013towards,timme2016high}.
We consider two models:
% \eqan
% M1: & \begin{array}{c}
% (X_{1},X_{2})\sim\mathcal{N}(0,\rho_{12}^{2}),
% \\
% Y=\tanh(w_{1}X_{1}+w_{2}X_{2}).
% \end{array}
% \label{eq:model1}
% \\
% \nn
% \\
% \nn 
% \\
% M2: & \begin{array}{c}
% (Z_{1},Z_{2})\sim\mathcal{N}(0,\rho_{12}^{2}),\\
% X_{1}=Z_{1}^{2},X_{2}=Z_{2}^{2},\\
%  Y=X_{1}/\left(0.1+w_{1}X_{1}+w_{2}X_{2}\right).
%  \end{array}
% \label{eq:model2}
% \enan 
\eqan
\begin{array}{c}
{\bf M1}
\\
(X_{1},X_{2})\sim\mathcal{N}(0,\rho_{12}^{2}),
\\
Y=\tanh(w_{1}X_{1}+w_{2}X_{2}).
\end{array}
\,  
\begin{array}{c}
{\bf M2}
\\
(X_{1},X_{2})\sim\mathcal{N}(0,\rho_{12}^{2}),\\
% X_{1}=Z_{1}^{2},X_{2}=Z_{2}^{2},\\
 Y=X_{1}^2/\left(0.1+w_{1}X_{1}^2+w_{2}X_{2}^2\right).
 \end{array}
\label{eq:models12}
\enan 
Both models are parameterized by the correlation $\rho_{12}$ and weights $w_1,w_2$. 
Model 1 is a particularly simple neural network. The $\tanh$ activation does not affect its copula, and  even for a linear activation function the variables are not jointly Gaussian since $Y$ is  deterministic on $(X_1,X_2)$. Model 2 is inspired by a normalization operation widely believed to be canonical in neural systems~\cite{Carandini2011-ju} 
and plays a role 
in common learned image compression methods~\cite{balle2016end}.
The results, presented in~\cref{fig:PID}. are obtained from~$3000$ samples from each model

\begin{figure*}[t!]
\begin{center}
\fbox{
\includegraphics[width=.48\textwidth, height=1.5in]{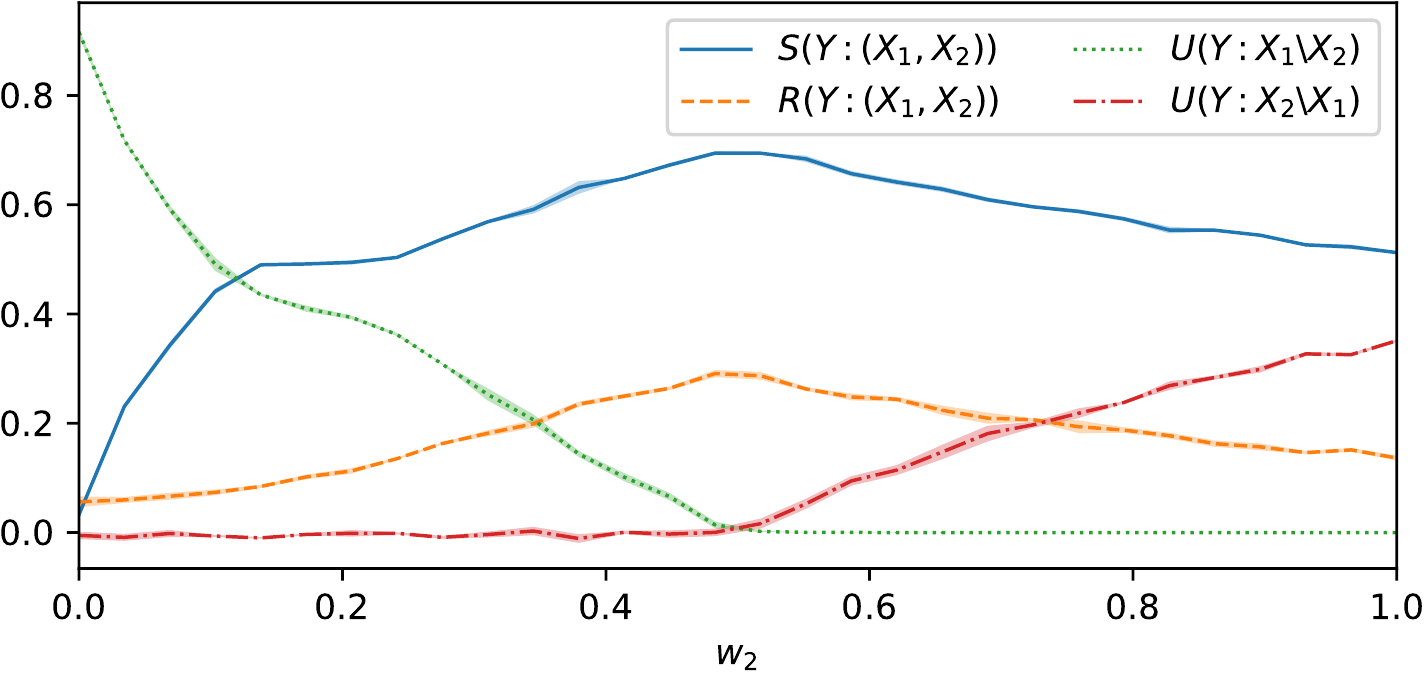}
\includegraphics[width=.48\textwidth, height=1.5in]{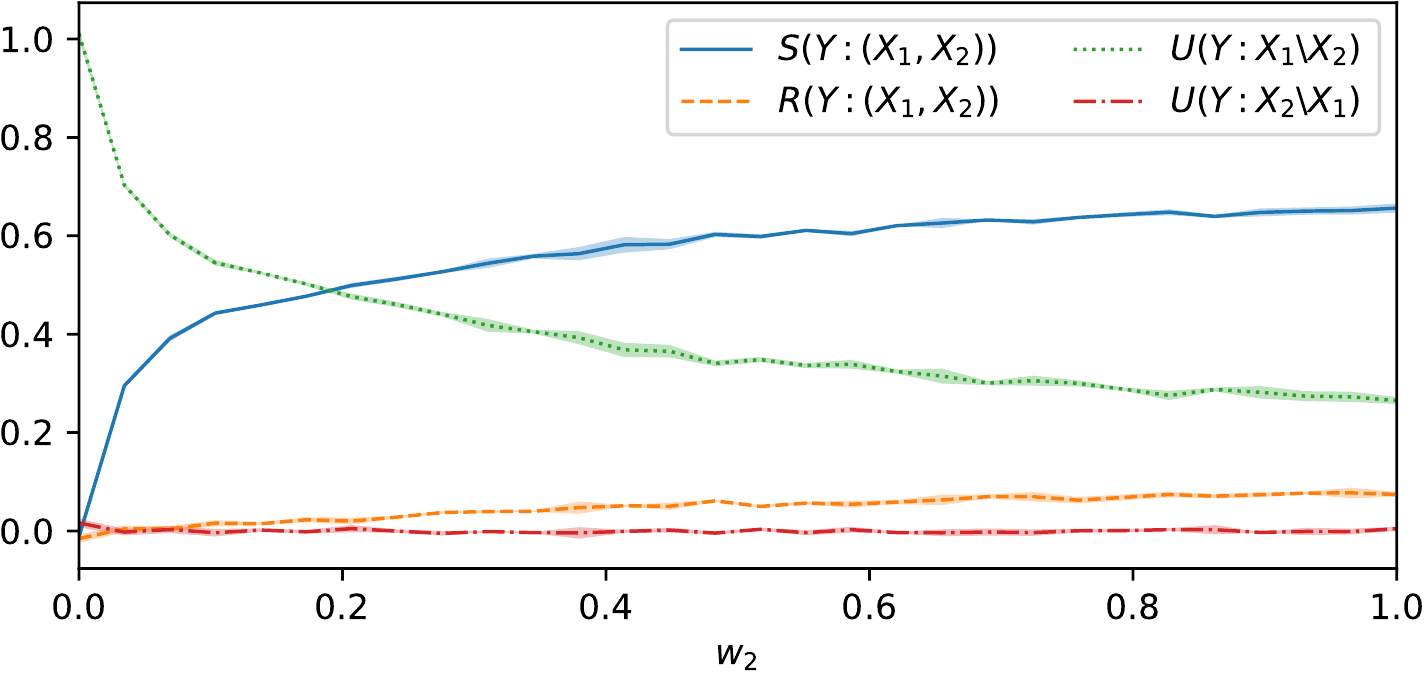}}
\end{center}
\caption{\textbf{Partial information decomposition for two neural network models.} 
In both models~\eqref{eq:models12} we fixed $w_1=0.5,\rho_{12}=0.3$,
and show the PID terms as a function of the synaptic strength~$w_2$, normalized by $I(Y \rt (X_1,X_2))$. We show mean (lines) and standard deviations (shaded area around each line) from 3 runs. 
\textit{Left:} Model 1: The input of greatest weight conveys 
all the unique information, and synergy and redundancy  both peak as $w_1=w_2$. 
\textit{Right:} Model 2: The second input $X_2$
has negligible  unique information contribution, but its
synaptic strength~$w_2$ modulates 
the synergistic term, associated to the
modification of information the neuron performs~\cite{lizier2013towards}.
% Some of the higher-frequency components of the curves could be due to model misspecification in the copula fitting. 
}
\label{fig:PID}
\end{figure*}

% \subsection{Consistency}
% Using the above Model 2 as an example, 
% we compared bound estimates of $U(Y \rt X_2 \backslash X_1)$ 
% with indirect estimates obtained from applying the consistency conditions to bound estimates of~$U(Y \rt X_1 \backslash X_2)$.
% The results in~\cref{fig:consistency} show good agreement, thus further validating the method.
% % , finding good agreement. We use the same parameters as those in \cref{fig:PID}, with $w_2$ ranging from $0.5$ to $1$. 
% \begin{figure}[t!]
% \begin{center}
% {
% \includegraphics[width=.3\textwidth]{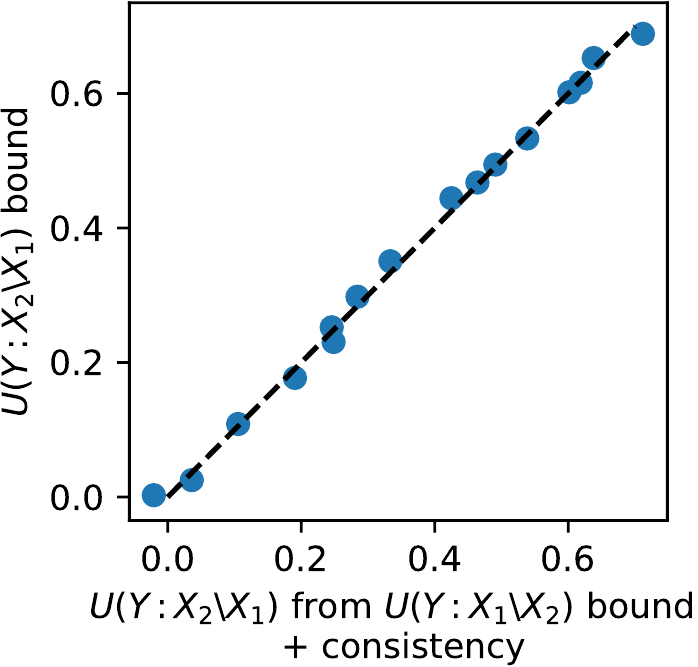}
% }
% \end{center}
% \caption{Comparison of direct vs. indirect estimates of $U(Y \rt X_2 \backslash X_1)$, 
% illustrating the consistency of the method. 

% Comparison of $U(Y \rt X_2 \backslash X_1)$ estimated by minimizing the bound \eqref{eq:I_q_bound} on $U(Y \rt X_1 \backslash X_2)$ and applying the consistency conditions \eqref{eq:consistency1}-\eqref{eq:consistency3} with the bound on $U(Y \rt X_2 \backslash X_1)$ estimated directly (by permuting the indices $1$ and $2$ in \eqref{eq:I_q_bound}). 
% The good agreement indicates that the estimation of the mutual information necessary to apply the consistency conditions is reasonably accurate.
% }
% \label{fig:consistency}
% \end{figure}

{
%\subsection{Applications in Neuroscience}
% \subsubsection{Algorithmic connectivity in recurrent neural circuits}
{\bf Computational aspects of connectivity in recurrent neural circuits.}
{We apply our continuous variable PID to understand computational aspects of the information processing  between recurrently coupled neurons (Fig.~\ref{fig:conn}). A large amount of work has been devoted to applying information theoretic measures for quantifying directed pairwise information transfer between nodes in dynamic networks and neural circuits~\cite{reid2019advancing}. However, classical information theory only allows for the quantification of information transfer, whereas the framework of PID enables further decomposition of information processing into transfer, storage, and modification, providing further insights into the computation within a recurrent system~\cite{wibral2017quantifying}. Transfer entropy (TE) \cite{schreiber2000measuring} is a popular measure to estimate the directed transfer of information between pairs of neurons~\cite{vicente2011transfer,novelli2021inferring}, and is sometimes approximated by linear Granger causality.  Intuitively, TE between a process $X$ and a process $Y$ measures how much the past of $X$, $X^-$, can help to predict the future of $Y$, $Y^+$, accounting for its past $Y^-$. Although TE quantifies how much information is transferred between neurons, it does not shed light on the computation emerging from the interaction of $X^-$ and $Y^-$. Simply put, the information transferred from $X^-$ could enter $Y^+$, independently of the past state $Y^-$, or it could be fused in a non-trivial way with the information in the state in $Y^-$\cite{williams2011generalized,wibral2017quantifying}. 
PID decomposes the TE into \textbf{modified transfer} (quantified by $S(Y^+:X^-,Y^-)$) and \textbf{unique transfer} 
(quantified by $U(Y^+:X^- \setminus Y^-)$) terms (see the Appendix for a proof):
\begin{equation*}
TE(X\rightarrow Y)=I(Y^+: X^-|Y^-)=U(Y^+:X^-\setminus Y^-)+S(Y^+:X^-,Y^-)    \,.
\end{equation*}
Furthermore, the information kept by the system through time can be quantified by the \textbf{unique storage} (given by $U(Y^+:Y^- \setminus X^-)$) and \textbf{redundant storage} (given by $R(Y^+:X^-,Y^-)$) in PID~\cite{lizier2013towards}. This perspective is a new step towards understanding how the information is processed in recurrent systems beyond merely detecting the direction functional interactions estimated by traditional TE methods (see Appendix~\ref{apx:experiments}, for details).
To explore these ideas, we simulated chaotic networks of rate neurons with an a-priori causal structure consisting of two sub-networks $\mathbf{X}$ and $\mathbf{Y}$ (Fig.~\ref{fig:conn}a, see \cite{nejatbakhsh2020predicting} for more details on causal analyses of this network model). The sub-network $\mathbf{X}$ is a Rossler attractor of three neurons obeying the dynamical equations:
\begin{gather}
\begin{cases}
    \dot X_1 = -X_2-X_3\\
    \dot X_2 = X_1 + \alpha X_2\\
    \dot X_3 = \beta + X_3(X_1-\gamma)
\end{cases}
\end{gather}
where $\{\alpha,\beta,\gamma\}=\{0.2,0.2,5.7\}$. {There are 100 neurons in the sub-network $\mathbf Y$ from which we chose the first three, $Y_{1:3}$, to simulate the effect of unobserved nodes.} Neurons within the sub-network $Y$ obey the dynamical equations
\begin{gather}
    \dot Y= -\lambda Y + 10 \tanh (J_{YX}X+J_{YY}Y)
\end{gather}
where $J_{YX} \in \mathbb{R}^{100 \times 3}$ 
has all its entries equal to $0.1$, and $J_{YY}$ is the recurrent weight matrix of the $Y$ sub-network, sampled as zero-mean, independent Gaussian variables with standard deviation $g=4$. No projections exist 
from the downstream sub-network $\mathbf{Y}$ to the upstream sub-network $\mathbf{X}$. 
We simulated time series from this network (exhibiting chaotic dynamics, see Fig.~\ref{fig:conn}a) and estimated the PID as unique, redundant, and synergistic contribution of neuron $i$ and neuron $j$ at time~$t$ in shaping the future of neuron $j$ at time $t+1$. For each pair of neurons $Z_i,Z_j \in \{X_{1:3},Y_{1:3}\}$ we treated $(Z_i^t,Z_j^t,Z_j^{t+1})_{t=1}^T$ as iid samples\footnote{Note that the estimation of the PID from many samples of the triplets
$(Z_i^t,Z_j^t,Z_j^{t+1})$ is operationally the same whether such triplets  are iid or, as in our case, temporally correlated. This is similar to estimating expectations w.r.t. the equilibrium distribution  of a Markov chain  by using temporally correlated successive values of the chain.
In both cases, the temporal correlations do not introduce bias in the estimator but can increase the variance.} and ran PID on these triplets ($i,j$ represent rows and columns in Fig.~\ref{fig:conn}b-d). The PID uncovered the functional architecture of the network and further revealed non-trivial interactions between neurons belonging to the different sub-networks, encoded in four matrices: modified transfer~$S$, unique transfer~$U_1$, redundant storage~$R$, and unique storage~$U_2$~(details in Fig.~\ref{fig:conn}d). The sum of the modified and unique transfer terms was found to be consistent with the~TE (Fig.~\ref{fig:conn}c, TE equal to $S+U_1$, up to estimation bias). The TE itself captured the network effective connectivity, consistent with previous results \cite{novelli2021inferring,nejatbakhsh2020predicting}.
}

\begin{figure}[t!]
    \centering
    \includegraphics[width=.7\textwidth]{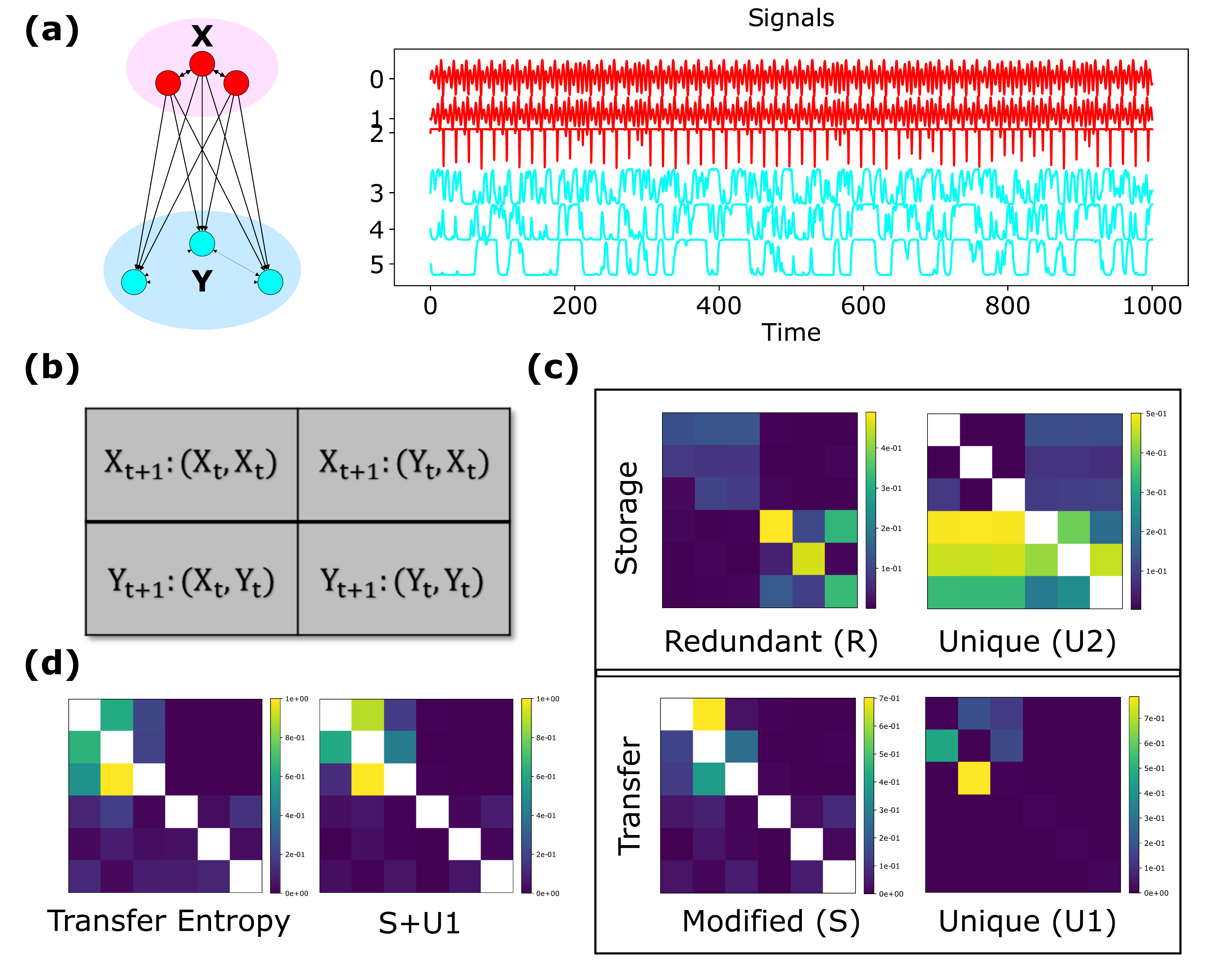}
    % \vspace{-1.2cm}
    \caption{{\bf PID uncovers the effective connectivity {and allows for the quantification of storage, modification, and transfer of information in a chaotic network of rate neurons. }} {\bf a}: Schematics of recurrent network architecture (left) and representative activity (right). {\bf b}: Schematic of the PID triplets for each $3 \times 3$ block of the matrices in c, d. {\bf c:} PID decomposition into modified transfer $S$, unique transfer $U_1$, redundant storage $R$, and unique storage $U_2$ for the rate network. The future of $X$ neurons only depends on unique information in the past of $X$ neurons and their synergistic interactions. The interactions between the $X$ and $Y$ sub-networks only contain synergistic information regarding the future of $Y$ but no redundant information; the latter is only present in the interactions confined within each sub-network.
    {\bf d}: The transfer entropy (TE), estimated via IDTxl~\cite{wollstadt2019idtxl}, recovers the sum of  modified and unique transfer terms~$S+U_1$.}
    \label{fig:conn}
\end{figure}

\begin{figure}[t!]
    \centering
    \includegraphics[width=1\textwidth]{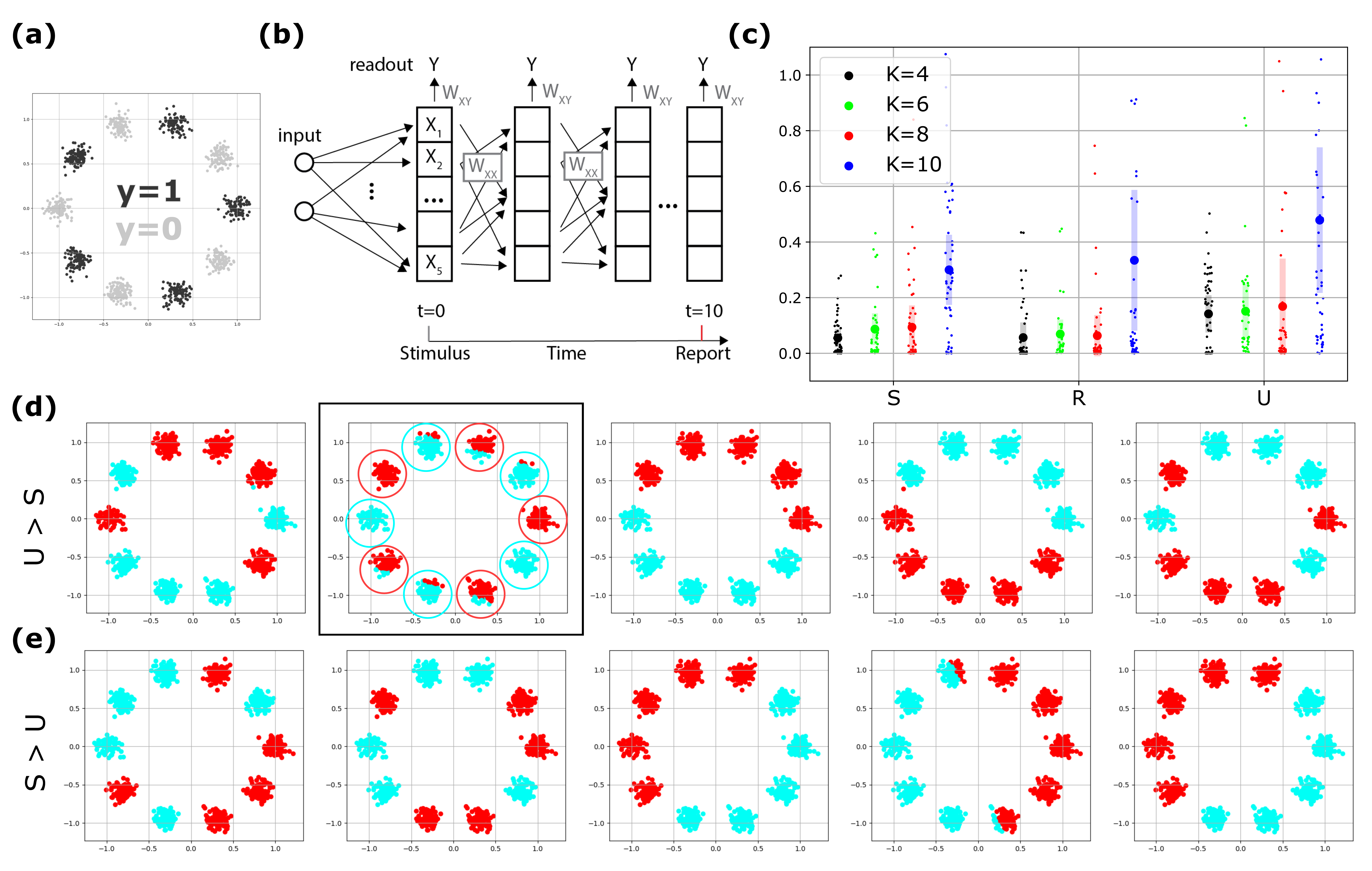}
    \caption{{\bf PID of RNNs trained to solve generalized XOR problem.} {\bf a}: Input data drawn from a 2D Gaussian Mixture Model with $K$ mixture components $X \sim \sum_{k=1}^K \frac{1}{K} \mathcal{N}(X|\mu_k,\sigma I)$ with means 
    lying on the unit circle (grey and black dots represent the two class labels). {\bf b}: Two layer network with 2D input layer, $5$ recurrently connected hidden neurons $X$ and one readout neuron $Y$; RNN activity unfolds in time (horizontal axis). The input is presented at time $t=0$, then withdrawn, and the RNN is trained with BPTT to report the decision at $t=10$. In this representation, layers correspond to time-steps and weights $W_{XX}$ are shared between layers.
    %{\bf d}:  The time course of classification accuracy in trained RNNs reveals perfect classification is achieved always earlier than reporting time ($t=10$) with latency proportional to task difficulty (larger $K$). Mean$\pm$s.e.m. across ?? networks.
    {\bf c}: PID between output $Y(t)$ and pairs of hidden neurons $X_i(t-1),X_j(t-1)$ for $t=10$ yielding $S,R,U_1,U_2$ (distribution over 1000 input samples for each task $K$; 20 networks per task). Harder tasks led to an increase in PID measures. {\bf d}: Example receptive fields for a network with $U>S$ shows emergence of grand-mother cells in the hidden layer (red and blue colors represent hidden neurons outputs; grandmother cell, second from left). {\bf e}: Example receptive fields for a network with $S>U$, relying on higher synergy between neurons to solve the task.}
    \label{fig:xor}
\end{figure}
 
% \subsubsection{Towards algorithm identification in trained recurrent neural networks}
{\bf Uncovering a plurality of computational strategies in RNNs trained to solve complex tasks.}
A fundamental goal in neuroscience is to understand the computational mechanisms emerging from the collective interactions of recurrent neural circuits leading to cognitive function and behavior. Here, we show that PID opens a new window for assessing how specific computations arise from recurrent neural interactions. Unlike MI or TE, the PID quantifies the alternative ways in which a neuron determines the information in its output from its inputs, and thus can be a sensitive marker of different computational strategies.
%For example, if a neurons retains a noisy copy of the content of another neuron solely for reliability and noise robustness purposes, this is reflected in the redundancy term in PID. In contrast, if a neuron performs an operation (e.g. addition or multiplication) on its inputs received from two presynaptic neurons, this is reflected in the synergy term in PID. Therefore PID provides invaluable insight into the computational role of neurons or sub-networks in solving tasks.
We here trained RNNs as models of cortical circuits \cite{mante2013context} and used the PID to elucidate how the computations emerging from recurrent neural interactions contribute to task performance. We trained RNNs to solve a generalized version of the classic XOR classification problem with target labels corresponding to odd vs. even mixture components (Fig.~\ref{fig:xor}a). Stimuli were presented for one time step ($t=0$) and the network was trained to report the decision at $t=10$. By tracking the temporal trajectories of the hidden layer activity we found that the network recurrent dynamics (represented as unfolded in time in Fig.~\ref{fig:xor}b) progressively pulls the two input classes in opposite directions along the output weights (see Appendix). We used PID to dissect how a plurality of different strategies emerge from recurrent neural interactions in RNNs trained for solving a classification task. The computation emerged from the recurrent interaction between hidden neurons at different time steps. Do all successfully trained networks have a similar profile in terms of the PID terms? If so, this hints at a single computational strategy across these networks. If not, it is safe to assume that task performance is reached via different mechanisms, despite identical network architecture and training algorithm.

% Fig.~\ref{fig:xor}d shows the distribution of synergy, redundant, and unique information values for triplets $(X_1,X_2,Y)$ where $X_i$ iterate over all the neurons in the hidden layer and $Y$ is the output of the network for 1000 samples from input distribution. 

We found that on average across multiple networks S, R, and U rose with task difficulty (Fig. \ref{fig:xor}c), yet at all difficulties, individual networks differed strongly with respect to the ratio $S/U$, i.e. there were networks with larger average synergy across neuron pairs compared to the average unique information, and vice versa. For simple networks like the ones used here, one can inspect receptive fields to understand the reason for this differential behaviour (Fig. \ref{fig:xor}d-e). Indeed, networks with high average unique information displayed 'grandmother-cell'-like neurons, that would alone classify a large parts of the sample space, while in networks with higher average synergy such cells were absent (Fig. \ref{fig:xor}d). The emergence of these 'grandmother-cell'-like receptive fields is due to the recurrent dynamics. While in a feedforward architecture ($W_{XX}=0$) hidden layer receptive fields are captured by hyperplanes in input space, in the RNN the receptive fields are time dependent, where later times are interpreted as deeper layers (Fig.~\ref{fig:xor}b) and thus can capture highly non-linear features in input space.
The advantage of PID versus a manual inspection of receptive fields is twofold: First, the PID framework abstracts and generalizes descriptions of receptive fields as being e.g.\,'grandmother-cell'-like; thus the concept of unique information stays relevant even in scenarios where the concept of a receptive field becomes meaningless, or inaccessible. Second, the quantitative outcomes of a PID rest only on information theory, not specfic assumptions about neural coding or computational strategies, and can be obtained for large numbers of neurons.

% \textcolor{red}{We found that when harder tasks (i.e., larger $K$) led to a decrease in synergy $S$ and an increase in unique information $U_2$ (Fig.~\ref{fig:xor}d). 
% Specifically, harder tasks led to the emergence of complex cells, whose receptive field at time $t=10$ classified the input class labels thus solving the problem (Fig.~\ref{fig:xor}d, inset). The emergence of these complex receptive fields is due to the recurrent dynamics. While in a simple feedforward architecture ($W_{XX}=0$) receptive fields of neurons $X$ represent hyperplanes in the input space, in an RNN the receptive fields of neurons $X$ are time dependent, with later times corresponding to deeper layers (Fig.~\ref{fig:xor}b) and thus can capture highly non-linear features in input space (Fig.~\ref{fig:xor}d, inset).
% We then tested whether the network relied on these complex cells to solve the task. We found a large correlation between the average unique information $<U(X_i)>={1/5}\sum_{j\neq i}U(Y:X_i\backslash X_j)$ of a hidden layer neuron $X_i$ and the magnitude of its corresponding readout weight $|W_{Y,X_i}|$. Such correlation increased with the task difficulty, suggesting the network increasingly relied on these complex cells to solve harder tasks.}

Comparison of our PID-based approach with the concept of neuronal selectivity used in neuroscience highlights interesting similarities and differences. Several kinds of selectivity (pure, mixed linear, and mixed non-linear) can be identified by performing regression analysis of neural responses vs. task variables \cite{rigotti2013importance}. In this framework, our grand-mother cells correspond to neurons with pure selectivity to the input class labels (a.k.a. "choice-selective" neurons). In the XOR task,  \cite{rigotti2013importance} showed that non-linear mixed selectivity of neurons to the class labels is beneficial when solving the XOR task, by leading to a high-dimensional representation of the task variables. While selectivity profiles are a property of single neuron responses to task variables, our PID measures are a property of the combined activity of triplets of neurons and thus reveal emerging functional interactions between units and their computational algorithms (see also \cite{timme2016high} and \cite{wibral2017quantifying}). This allowed us to characterize a functional property of neural systems less studied than task variable selectivity: the computations that require functional mixing of the information from multiple units (measured by the average synergistic information) vs. the computations that rely on the output of individual neurons (measured by the unique information and described as grandmother cells). Concretely, by comparing PID and receptive fields we found that that in networks with high unique information, neurons typically have receptive fields with pure selectivity (grandmother cells, with large unique information to the class labels). In networks with high synergy, neurons show complex mixed selectivity to class labels.

}

\section{Conclusions}
We presented a partial information decomposition measure for continuous variables with arbitrary probability densities, thereby extending the popular BROJA PID measure for discrete variables. Extending PID measures to continuous variables drastically broadens the possible applications of the PID framework. This is important as the latter provides key insights into the way a complex system represents and modifies information in a computation -- via asking which variables carry information about a target uniquely (such that it can only be obtained from that variable), redundantly, or only synergistically with other variables. Answering these questions is pivotal to understanding distributed computation in complex systems in general, and neural coding in particular.
% While our method was presented for continuous distributions,
% we note that it can also be applied to discrete distributions
% when the latter are expressed via continuous copulas (see~\cite{berkes2008characterizing} 
% for an example with neural spiking data).
% This approach will be more sample-efficient 
% than fully discrete methods in large space states.  
% , since the cost of representing the discrete joint distribution in order to perform the decomposition scales cubically with the size of the space. In contrast, if the copula is continuous it can be represented in cost that is independent of the state space size. 
We believe that the methods presented here will allow PIDs to be extended efficiently in neuroscience for multiple continuous sources with potentially complex dependency structures, as would be common in cellular imaging data or activation properties of brain modules or areas in functional imaging. More generally, the approach we presented here would be relevant for other application domains such as machine learning, biomedical science, finance, and the physical sciences.

%The code is available on GitHub at~\url{dargilboa/continuous_PID}
% More generally, we believe that this new method 
% will extend the powerful lens offered by PIDs into 
% many neural and physical systems, 
% yielding a deeper understanding of their information processing capabilities.

\newpage 
\clearpage

\section*{Acknowledgments}
We thank Thibault Vatter and Praveen Venkatesh for conversations. 
 The work of AP is supported by the Simons Foundation, the DARPA NESD program, NSF NeuroNex Award DBI1707398 and The Gatsby Charitable Foundation. DG is supported by a Swartz Fellowship.  AM and MW are supported by Volkswagenstiftung under the program `Big Data in den Lebenswissenschaften’ and by the Ministry for Science and Education of Lower Saxony and the Volkswagen Foundation through the `Niedersächsisches Vorab'.
 LM is supported by NINDS Grant NS118461 (BRAIN Initiative). ES is supported by the Simons Collaboration on the Global Brain (542997) as well as research support from Martin Kushner Schnur and Mr. and Mrs. Lawrence Feis, and is the Joseph and Bessie Feinberg Professorial Chair. 

\bibliographystyle{unsrt}  
\bibliography{thebib}

\newpage 
\appendix 
\section*{Supplementary Material}

\section{Estimating the gradients}
\label{app:gradients}
As shown in \cref{sec:estimation}, the unique information 
$U(Y \rt X_1 \backslash X_2)$ is upper bounded as 
\eqan 
I_q[\theta] \leq 
B_1[\theta]  + B_2[\theta,\phi] \,, 
\label{app_eq:I_q_bound}
\enan 
where 
\eqan 
B_1[\theta] &=&
\mathbb{E}_{c_{\theta}(u_y,u_1,u_2)}
\log \left[ 
c(u_y, u_1) c_{1,2|\theta(u_y)}(u_{1|y}, u_{2|y})
\right] \,, 
\label{eq:B1}
\\
B_2[\theta,\phi]& = &
- \mathbb{E}_{c_{\theta}(u_1,u_2)}
 D_{A,\theta, \phi}(u_1,u_2) \,,
\label{eq:B2}
\enan 
and
\eqan 
D_{A,\theta, \phi}(u_1,u_2) = 
\mathbb{E}_{r_{\phi}(u_y^{(1)} \ldots u_y^{(A)}|u_1,u_2 ) } 
\log \left[
\frac{1}{A} \sum_{a=1}^A 
\frac{c_{\theta}(u_y^{(a)},u_1,u_2)}{r_{\phi}(u_y^{(a)}|u_1,u_2)}
\right] \,, 
\label{eq:Dk}
\enan 
and the above expectation is w.r.t.
\eqan 
r_{\phi}(u_y^{(1)} \ldots u_y^{(A)}|u_1,u_2 )  \equiv 
\prod_{a=1}^A  r_{\phi}(u_y^{(a)}|u_1,u_2 ) \,.
\enan 
The parametrization we use for the inference distribution $r_{\phi}(u_y|u_1,u_2 )$ is detailed below in~\Cref{app:inference}. We are interested in 
minimizing the r.h.s. of (\ref{app_eq:I_q_bound}) w.r.t. $(\theta,\phi)$.
To obtain low-variance gradients, 
it is convenient to eliminate the $\theta, \phi$ dependence 
in the measures of (\ref{eq:B1})-(\ref{eq:Dk}) using the `reparametrization trick'~\cite{kingma2013auto}. 

The idea is to obtain samples from 
 $c_{\theta}(u_y,u_1,u_2)$ by a $\theta$-dependent transformation of 
three $\textrm{Unif} [0,1]$ samples $\vv=(v_y,v_1,v_2)$,
and samples from $r_{\phi}(u_y|u_1,u_2)$ 
by a $(u_1,u_2,\phi)$-dependent transformation of 
$\epsilon \sim \textrm{Unif}[0,1]$.  
We present the details of these transformations
in Appendices~\ref{app:inverse_CDF} and ~\ref{app:inference},
respectively.

Taking $M$ samples of $c_{\theta}(u_y,u_1,u_2)$
and denoting them 
as $\bar{u}_y^{(m)}, \bar{u}_1^{(m)}, \bar{u}_2^{(m)}$,
we can estimate (\ref{eq:B1}) as
\eqan 
B_1[\theta] \simeq
\frac{1}{M} \sum_{m=1}^M 
\log \left[ c(\bar{u}^{(m)}_y, \bar{u}^{(m)}_1) \, 
c_{1,2|\theta \left(\bar{u}_y^{(m)} \right)}( \bar{u}^{(m)}_{1|y}, \bar{u}^{(m)}_{2|y}))
\right] 
\enan 
where we denoted $\bar{u}_{i|y} = F(u_i= \bar{u}_i| u_y=\bar{u}_y )$ for $i=1,2$. 
An estimate of the gradient $\nabla_{\theta}B_1$
is obtained by acting on this  expression with $\nabla_{\theta}$, which 
also acts on the $\theta$-dependent samples.

Denoting $A$  samples from 
$r_{\phi}(u_y|u_1,u_2)$ as 
~$\hat{u}_{y}^{(a)}$, we can also estimate  (\ref{eq:B2}) as 
\eqan 
B_2[\theta,\phi] 
 \simeq 
-\frac{1}{M}\sum_{m=1}^M
\log \left(
\frac{1}{K}\sum_{a=1}^A 
w_{a,m}
\right) \,,
\label{eq:B2_estimate}
\enan 
where we defined
\eqan
w_{a,m} = 
\frac{c_{\theta}(\hat{u}_{y}^{(a)}, 
\bar{u}^{(m)}_1, \bar{u}^{(m)}_2)}
{r_{\phi}(\hat{u}_{y}^{(a)}|\bar{u}^{(m)}_1, \bar{u}^{(m)}_2) } \,.
\enan 
Acting on this expressions with $\nabla_{\theta}$ yields 
an estimate of  $\nabla_{\theta}B_2$. 
On the other hand, as noted in~\cite{rainforth2018tighter}, 
the estimate of 
$\nabla_{\phi}B_2$ resulting from acting with 
$\nabla_{\phi}$ on~(\ref{eq:B2_estimate})
has a signal-to-noise ratio which decreases with $A$.
A solution to this problem was found in~\cite{tucker2018doubly}, which 
showed that a stable gradient estimate can be obtained instead as
\eqan 
\nabla_{\phi}B_2 \simeq \frac{-1}{M}
\sum_{m=1}^M\sum_{a=1}^A
 \left( \frac{w_{a,m}}{\sum_{s=1}^A w_{s,m}} \right)^2
 \frac{\partial \log w_{a,m}}{\partial \hat{u}_{y}^{(a)}}
  \nabla_{\phi} \hat{u}_{y}^{(a)} \,, 
\enan 
and this is the estimate 
we  use in our experiments.

\section{The bivariate Gaussian copula} 
\label{app:gaussian_copula}
A  bivariate Gaussian copula  is parametrized by $\theta \in [-1,1]$ and given by 
\eqan 
c(u_1,u_2) = \frac{1}{\sqrt{1-\theta^2}} 
\exp \left\{  -\frac{\theta^2 (x_1^2 + x_2^2) - 2\theta x_1 x_2}{2(1-\theta^2)}  \right\} 
\enan 
where $x_i = \Phi^{-1}(u_i)$  and $\Phi$ is the standard univariate Gaussian CDF. 
For explicit expressions of other popular bivariate copulas, see~\cite{aas2009pair}. 

\section{Sampling from the copula}
\label{app:inverse_CDF}

In this section we show how to obtain samples from the three-dimensional copula
\begin{eqnarray}
\label{eq:q_expansion2}
\begin{aligned}
c_{\theta}(u_y,u_1,u_2) & =c(u_y, u_1) \, c(u_y, u_2) \, 
  c_{1,2|y,\theta}(u_{1|y}, u_{2|y}) 
\end{aligned}
\end{eqnarray}
by applying a $\theta$-dependent transformation to samples from~Unif$[0,1]$. 
We use the Rosenblatt  transform~\cite{rosenblatt1952remarks}, which consists in using the inverse CDF method
to sample from each factor in 
\eqan 
c(u_y,u_1,u_2) = c(u_1)c(u_y|u_1)c(u_2|u_y,u_1) .
\enan 
We denote $F(\cdot|\cdot)$ is the CDF of $c(\cdot|\cdot)$. 
Adopting the notation of~\cite{aas2009pair}, we define
\eqan 
h_{ij}(u_i,u_j) = F(u_i|u_j) = \frac{\partial C(u_i,u_j)}{\partial u_j }  \,\, i,j=1,2.
\enan 
For several popular parametric families of bivariate
copulas, such as those we consider in this paper, explicit expressions 
are known for $h_{ij}(u_i,u_j)$ along with its inverse 
 $h^{-1}_{ij}(\cdot,u_j)$ w.r.t. the first argument (see e.g.~\cite{aas2009pair}). 
 Note that using this notation, the arguments in the last factor of (\ref{eq:q_expansion2}) are $u_{i|y} = h_{iy}(u_i,u_y)$ (i=1,2).

We first sample $(v_1,v_y,v_2)$ from Unif$[0,1]$ and successively obtain 
$u_1,u_y,u_2$ by inverting the functions in the r.h.s. of 
\begin{eqnarray*}
v_1 &=& F(u_1) \,,
\\
&=& u_1 \,,
\\
v_y &=& F(u_y|u_1) \,,
\\
&=& h_{y1}(u_y,u_1) \,, 
\\
v_2 &=& F(u_2|u_1,u_y) \,, 
\\
&=& h_{21|\theta(u_y)}(F(u_2|u_y),F(u_1|u_y)) \,,
\\
&=& 
h_{21|\theta(u_y)}(h_{2y}(u_2,u_y),h_{1y}(u_1,u_y)) \,.
\end{eqnarray*}
Explicitly, we get 
\begin{eqnarray*}
u_1 &=& v_1 \,,
\\
u_y &=& h_{y1}^{-1}(v_y,u_1) \,,
\\
u_2 &=& h_{2y}^{-1}(h_{21|\theta(u_y)}^{-1}(v_2,h_{y1}(u_y,u_1)),u_y ) \,.
\end{eqnarray*}

Note that only
$u_2$  actually depends on $\theta$.

\section{Parametrization of the learned models}
\label{app:inference} 

\subsection{Parametrizing the learned conditional copula}
The conditional Gaussian copula 
$c_{1,2|\theta(u_y)}(u_{1|y}, u_{2|y})$ is parametrized 
by the function $\theta(u_y):[0,1] \rightarrow [-1,+1]$. For its functional form we used
\eqan 
\theta(u_y) = \tanh \left( \sum_{i=1}^{16} w_{2,i} \tanh (w_{1,i} u_y + b_1)     +b_2\right) 
\enan 
where $w_{1,i},w_{2,i},b_1,b_2 \in \mathbb{R}$. 

\subsection{Parametrizing and sampling from the inference distribution}

In our experiments we parametrize
the inference distribution $r_{\phi}(u_y|u_1,u_2)$
via its CDF, as 
\eqan 
R_{\phi}(u_y|u_1,u_2) &=& \int_{0}^{u_y} du \, r_{\phi}(u|u_1,u_2)  \,, 
\\
&=&
\frac{ 1 }{1 + e^{-z(u_y) a_{\phi}(u_1,u_2) - b_{\phi}(u_1,u_2)  }},
\nn
\enan 
where $z(u_y) = \log \left(\frac{u_y}{1-u_y} \right)$. Derivating w.r.t. $u_y$ gives
\eqan 
r_{\phi}(u_y|u_1,u_2) = R_{\phi}(1-R_{\phi})a_{\phi} 
\left( u_y^{-1} + (1-u_y)^{-1} \right) \,.
\enan

The functions $a_{\phi}(u_1,u_2)$ and $b_{\phi}(u_1,u_2)$ take values in $\mathbb{R}$ and 
are parametrized with a neural network with two hidden layers, and we impose $a_{\phi}(u_1,u_2) >0$ in order to make $R_{\phi}$ monotonous with $u_y$.
In order to sample from $r_{\phi}$, we draw $\epsilon \sim \textrm{Unif}[0,1]$, and use 
the inverse CDF method to obtain
\eqan 
u_y(\epsilon, u_1,u_2) &=& R^{-1}_{\phi}(\epsilon|u_1,u_2) \,, 
\\
&=& \frac{ 1 }{1 + e^{-(z(\epsilon) - b_{\phi}(u_1,u_2))/a_{\phi}(u_1,u_2)   }} \,.
\nn
\enan

\section{Comparison with a discrete estimator}
% opening paragraph: high level idea
In this section we estimate the PID of the two models of three neurons from eq.~(\ref{eq:models12}) using the discrete estimator~\texttt{BROJA-2PID}~\cite{makkeh2018broja}. 
In particular, we present a quantization scheme of the continuous models that leads to a qualitative agreement between the discrete and continuous estimators, thus further validating  the results of the latter. 

Let us denote the discretized versions of $X_1, X_2, Y$ as 
$\hat{x}_1, \hat{x}_2, \hat{y}$. 
The discrete PID estimators require as input 
a distribution $p(\hat{x}_1,\hat{x}_2,\hat{y})$~\cite{makkeh2018broja}. To create the latter from our continuous models in eq.~(\ref{eq:models12}), we start by dividing the continuous range of each $X_i (i=1,2)$ into~$N_x$ segments, and associate each segment with a discrete  value $\hat{x}_i$ 
equal to the value of $X_i$  in the middle of each segment. 
To each square in the resulting 2D $N_x \times N_x$ grid 
we associate a discrete probability $p(\hat{x}_1,\hat{x}_2)$
equal to the integral of the joint Gaussian density 
of $(X_1,X_2)$ in the square. 
Finally, in each of the two models, we split the $Y$ range into $N_y$ segments $\{ s_i\}_{i=1}^{N_y}$. The boundaries of the segments 
are chosen such that the same fraction $1/N_y$ of values of $Y=Y(X_1,X_2)$ falls into each segment using eq.~(\ref{eq:models12}), a procedure called `maximum entropy binning'. 
Let~$\hat{y} \in \{1 \ldots N_y \}$. 
Using this  quantization, the three-dimensional discrete distribution is defined as 
\begin{align}
p(\hat{x}_1,\hat{x}_2,\hat{y})
& = 
\begin{cases}
p(\hat{x}_1,\hat{x}_2) \quad &  \textrm{if} \,\,\,  Y(\hat{x}_1,\hat{x}_2) \in s_{\hat{y}} \,,
\\
0 \quad &  \textrm{if} \,\,\,  Y(\hat{x}_1,\hat{x}_2) \notin s_{\hat{y}} \,,
\end{cases}
\end{align} 
where in each model 
$Y(\hat{x}_1,\hat{x}_2)$ is obtained from eq.~(\ref{eq:models12}). 
Fig.~\ref{fig:discrete_model_1_model_2} shows the results of 
the discrete PID obtained using this quantization for the two models considered, assuming $X_i \in [-8, 8]$ and $N_x = 16$ equally-sized segments. For the $Y$ quantization we used $N_y=3$. Note the qualitative agreement with the continuous results in~Fig.~\ref{fig:PID}. 

\begin{figure}[h!]
    \centering
    \includegraphics[width=.95\textwidth]{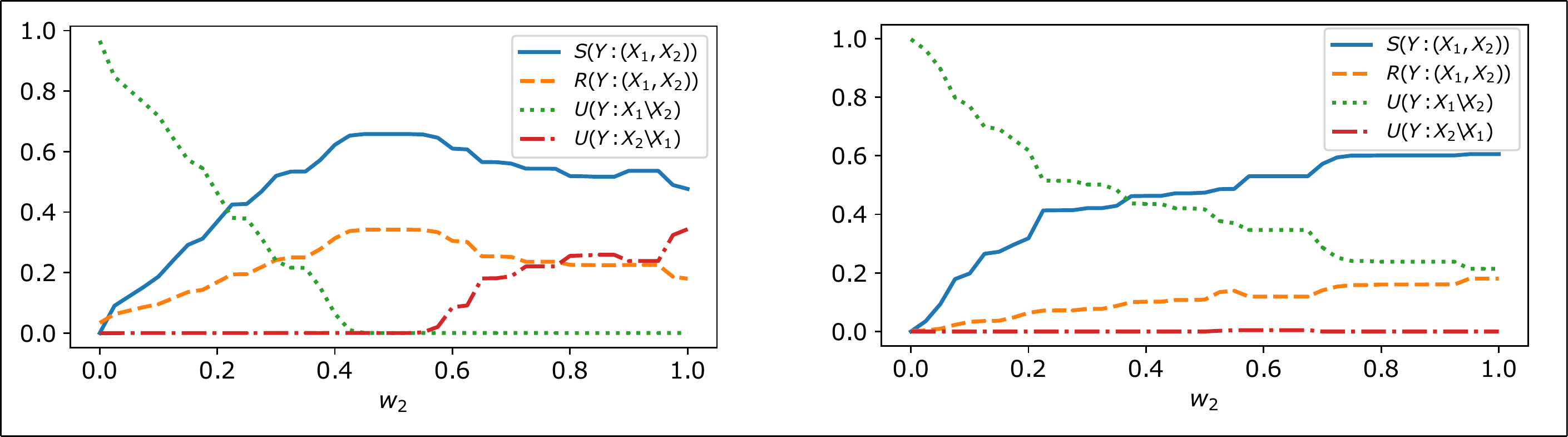}
    \caption{{\bf Qualitative agreement of discrete and continuous PID estimations.} The two models are defined in~\eqref{eq:models12}
    (left: Model 1, right: Model 2), and we used the same model
    parameters  indicated in Fig.~\ref{fig:PID}.
We show the normalized discrete PID terms as a function of the synaptic strength~$w_2$. See the text for details on the discrete quantization used. 
     Note that for both models the discrete results agree qualitatively with the continuous results in Fig.~\ref{fig:PID}.}
    \label{fig:discrete_model_1_model_2}
\end{figure}

% \section{PID refines transfer entropy into state-dependent and -independent transfer entropies\label{apx:te_and_pid}}

%  Indeed, we can inspect this tranfer entropy from the lens of PID where $Y^+$ is the target and $X^-$ and $Y^-$ are the two sources. The consistency equations of PID~\eqref{eq:consistency1} and~\eqref{eq:consistency2} imply that $I(Y^{+} : X^- \mid Y^-) = U(Y^+ : X^- \backslash Y^-) + S(Y^+: X^-, Y^-).$

% This refinement of $I(X\to Y)$ is intuitive since any information flow from $X$ to $Y$ would mean that such  information flow requires accessing $X$ in some sense. Therefore $I(X\to Y)$ comprises the two PID terms that require $X,$ that is (i) its unique information about $Y$ (which Williams and Beer~\cite{williams2011generalized} call \emph{state independent transfer entropy}) and (ii) its synergistic information (in this case, with the past of $Y$) about $Y$ (which~\cite{williams2011generalized} call \emph{state-dependent transfer entropy}).

% Further generalizations for the multivariate transfer entropy were proposed in \cite{williams2011generalized}. Recent work by Mediano el al.~\cite{mediano2019beyond} shows that a generalized PID  can diligently reveal the causal structure of complex systems.  

\section{Consistency}
Using Model 2 from~Eq.(\ref{eq:models12}) 
as an example, 
we compared estimates of $U(Y \rt X_2 \backslash X_1)$ 
with indirect estimates obtained from applying the consistency conditions to estimates of~$U(Y \rt X_1 \backslash X_2)$.
The results in~\cref{fig:consistency} show good agreement, thus further validating the method.
\begin{figure}[h!]
\begin{center}
{
\includegraphics[width=.3\textwidth]{}
}
\end{center}
\caption{Comparison of direct vs. indirect estimates of $U(Y \rt X_2 \backslash X_1)$, 
illustrating the consistency of the method. 
% Comparison of $U(Y \rt X_2 \backslash X_1)$ estimated by minimizing the bound \eqref{eq:I_q_bound} on $U(Y \rt X_1 \backslash X_2)$ and applying the consistency conditions \eqref{eq:consistency1}-\eqref{eq:consistency3} with the bound on $U(Y \rt X_2 \backslash X_1)$ estimated directly (by permuting the indices $1$ and $2$ in \eqref{eq:I_q_bound}). 
% The good agreement indicates that the estimation of the mutual information necessary to apply the consistency conditions is reasonably accurate.
}
\label{fig:consistency}
\end{figure}

\section{More on the experiments\label{apx:experiments}}
In this section we provide more details on the last two experiments presented in~\Cref{sec:examples}. 

{\bf Computational aspects of connectivity in recurrent neural circuits.}
\\
We start by deriving the  relation  $TE = S + U_1$ verified in this experiment (Fig.~\ref{fig:conn}d). Transfer entropy~\cite{schreiber2000measuring} $TE(X\to Y)$ is defined as $I(Y^+ : X^- \mid Y^-)$ where $Y^+$ is the future of state of $Y$, $X^-$ and $Y^-$ are the past states of $X$ and $Y$, respectively.
Consider the chain rule for mutual information,
\eqan 
I(Y^+ \rt (X^-,Y^-)) = I(Y^+ \rt Y^-)  + I(Y^+ \rt X^-|Y^-) \,.
\enan 
Replacing $I(Y^+ \rt (X^-,Y^-))$ and $I(Y^+ \rt Y^-)$ by 
the r.h.s. of (\ref{eq:consistency1}) and (\ref{eq:consistency3}), we get \eqan 
I(Y^+ \rt X^-|Y^-) = U(Y^+ \rt X^- \backslash Y^-) + S(Y^+ \rt (X^-, Y^-)) \,,
\enan 
which is the equation we verified by estimating  separately the left and right sides. The two terms in the r.h.s. are called 
\emph{state-independent transfer entropy} and 
\emph{state-dependent transfer entropy} respectively in~\cite{williams2011generalized}, reflecting their intuitive meaning.

In Fig.~\ref{fig:sup1}, we analyze the state space of the network in Fig.~\ref{fig:conn} of the main text. The activities of the upstream sub-network X and downstream sub-network Y are shown, projected onto their first 
two principal components (PCs). The causal structure and algorithmic details of the effective connectivity between the two sub-networks cannot be identified solely by the observation of their geometrical properties.

{\bf Uncovering a plurality of computational strategies in RNNs trained to solve complex tasks.}
\\
{Each RNN has fully connected architecture with \texttt{tanh} non-linearity. Data was generated by sampling from the GMM with K components ($K \in \{4,6,8,10\}$) in batches of 128 data points with the total number of $3000$ batches. The RNNs were trained using standard backprop in time using \texttt{Adam} optimizer in \texttt{Pytorch} package with a learning rate of $0.01$. For each trained RNN we considered all triplets $(Y,X_i,X_j)$ where $i,j \in \{1,\dots,5\}$, i.e. the target variable is the output of the network $Y$ and the source variables iterate over all pairs of the hidden nodes in the RNN. Once the RNN is trained we collect a test sample of $1000$ data points from the same GMM used for training, and evaluate the nodes when inputting the RNN using test data and running it forward for $t=10$ time steps. This gives us $1000$ samples from each variable $X_{1:5},Y$ which we then use for PID analysis on the triplets mentioned above. For each level of task difficulty $K \in \{4,6,8,10\}$ we trained 5 RNNs and performed PID ($A=100$) on the resulting trained networks.}

In Fig.~\ref{fig:sup2} more details on the trained RNN's in Fig.~\ref{fig:xor} of the main text are illustrated, providing more insight into the computational strategies employed by each trained instance as the task complexity grows. The first row shows the time evolution of the recurrent layer of hidden units projected onto their first 3 PC's. For these RNN instances, the ones with $K=6,10$ have grand mother-like cells (as confirmed by the receptive field plots in Fig.~\ref{fig:sup2}c), with large unique information compared to the other cells. These grand mother-like cells cannot be inferred by just inspecting the geometry of the hidden units in the state space, but can be identified with the PID. PID reveals more details about the computation and the differences between strategies for different instances of trained RNN's. {Details of the PID for individual hidden nodes including average unique and synergistic information for each node, its mutual information with the output node, and the decoding weight connecting the hidden node to the output unit is included in Table~\ref{tab:neurons-xor}.}

\begin{figure*}[t!]
	\begin{center}\fbox{\includegraphics[width=1\textwidth]{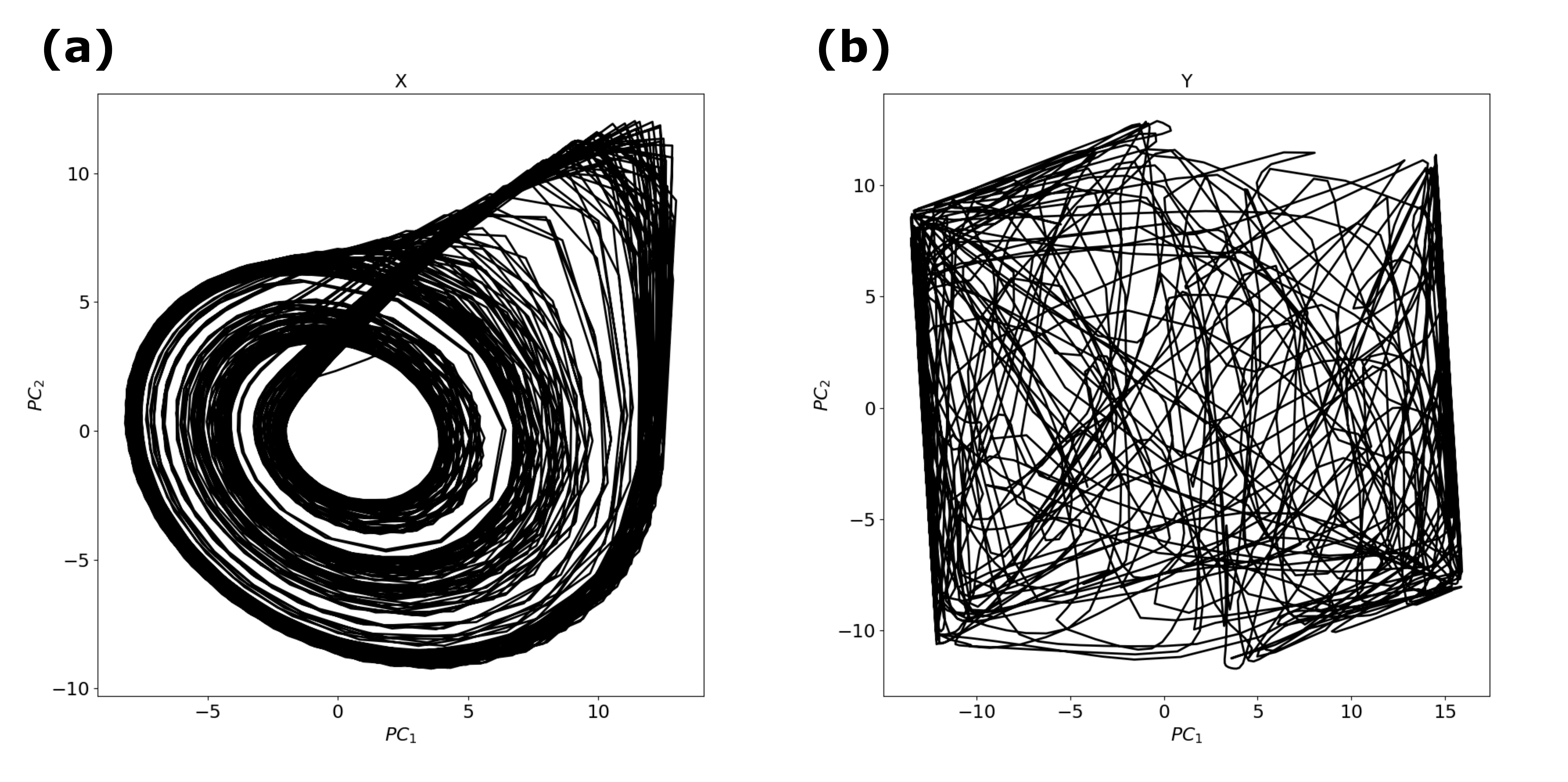}}
	\end{center}
	\caption{ {\bf State space of the chaotic network of rate neurons:} Projection of the state space of the recurrent units for upstream network $X$ (a) and downstream network $Y$ (b) onto their respective first two principal components.}
	\label{fig:sup1}	
\end{figure*}

\begin{figure*}[t!]
	\begin{center}
	\fbox{\includegraphics[width=.95\textwidth]{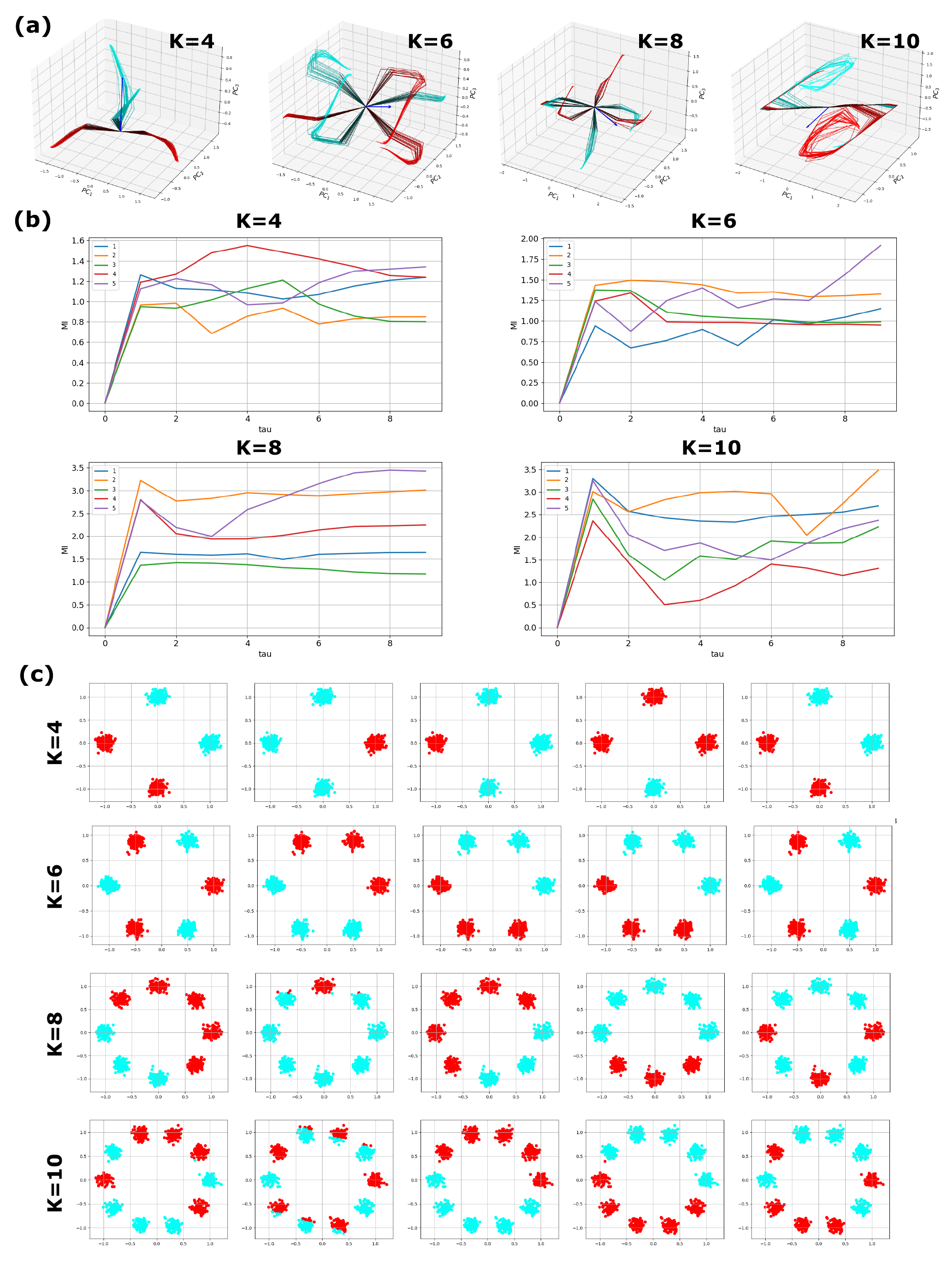}}\end{center}
	\caption{{\bf Algorithmic investigation of trained RNN's on generalized XOR task:} (a) Evolution of the hidden unit activations in time (recurrent time steps). Darker colors correspond to earlier time points; red and cyan correspond to even and odd trials. Blue arrow corresponds to decoding direction, i.e. the predicted label is given by the sign of the projection of the last time point of each trajectory onto this direction. (b) Mutual information between individual hidden units and the output of the network as a function of recurrent time steps for the different tasks. (c) Receptive fields of individual neurons, in certain cases (K=6, unit 1 and K=10, unit 2) grand mother-like cells can be observed, yielding greater unique information than synergistic information hinting at the algorithmic strategy employed by that instance of the trained RNN.
% 	The title of each receptive field includes the average unique and synergistic information of that node, its mutual information with the output node, and the decoding weight connecting the hidden node to the output unit.
	}
	\label{fig:sup2}	
\end{figure*}

% Please add the following required packages to your document preamble:
% \usepackage{multirow}

% Please add the following required packages to your document preamble:
% \usepackage{multirow}
{\color{red}
\begin{table}[]
\resizebox{\textwidth}{!}{
\begin{tabular}{|c|c|c|c|c|l|c|c|c|c|l|c|c|c|c|l|c|c|c|c|}
\hline
\multicolumn{1}{|l|}{} & \multicolumn{4}{c|}{\textbf{K=4}}                    &                            & \multicolumn{4}{c|}{\textbf{K=6}}                    &           & \multicolumn{4}{c|}{\textbf{K=8}}                    &                            & \multicolumn{4}{c|}{\textbf{K=10}}                   \\ \hline
\textbf{N}             & \textbf{UI} & \textbf{SI} & \textbf{MI} & \textbf{W} & \multirow{6}{*}{\textbf{}} & \textbf{UI} & \textbf{SI} & \textbf{MI} & \textbf{W} & \textbf{} & \textbf{UI} & \textbf{SI} & \textbf{MI} & \textbf{W} & \multirow{6}{*}{\textbf{}} & \textbf{UI} & \textbf{SI} & \textbf{MI} & \textbf{W} \\ \cline{1-5} \cline{7-15} \cline{17-20} 
1                      & 0.28        & 0.27        & 1.22        & -0.05      &                            & 0.72        & 0.38        & 1.1         & -0.11      &           & 0           & 0.22        & 1.59        & 0.09       &                            & 0.05        & 0.19        & 2.7         & 0.27       \\ \cline{1-5} \cline{7-15} \cline{17-20} 
2                      & 0.39        & 0.59        & 0.9         & -0.57      &                            & 0.01        & 0.39        & 1.24        & -0.15      &           & 1.58        & 0.42        & 2.93        & -0.57      &                            & 1.74        & 0.13        & 3.47        & -0.55      \\ \cline{1-5} \cline{7-15} \cline{17-20} 
3                      & 0.04        & 0.39        & 0.84        & 0.18       &                            & 0.01        & 0.39        & 0.93        & -0.14      &           & 0.04        & 0.05        & 1.18        & 0.07       &                            & 0.38        & 0.46        & 2.21        & 0.04       \\ \cline{1-5} \cline{7-15} \cline{17-20} 
4                      & 0.01        & 0.22        & 1.2         & -0.38      &                            & 0.02        & 0.32        & 0.93        & 0.02       &           & 0.34        & 0.33        & 2.16        & 0.03       &                            & 0           & 0.37        & 1.24        & 0.02       \\ \cline{1-5} \cline{7-15} \cline{17-20} 
5                      & 0.4         & 0.29        & 1.38        & -0.34      &                            & 0.56        & 0.69        & 1.8         & -0.43      &           & 0.01        & 0.41        & 3.39        & -0.62      &                            & 0.02        & 0.2         & 2.32        & -0.21      \\ \hline
\end{tabular}}
\vspace*{2mm}
\caption{\label{tab:neurons-xor}\textbf{Node-specific details for generalized XOR task:} Average node-specific unique, synergistic, and mutual information (UI, SI, MI) and the decoding weight for different nodes in the hidden layer ($N \in \{1,\dots,5\}$) and for different task difficulty levels ($K \in \{4,6,8,10\}$).}
\end{table}
}
% \section{PID in the UCI power plant  dataset}

% \begin{figure*}[t!]
% 	\begin{center}
% 		\fbox{		
% 		\includegraphics[width=.95\textwidth]{}}
% 	\end{center}
% 	\caption{
% 	 {\bf PID for a real dataset.} 
% 	 The target variable (Y) is the power output of
% 	 an electric plant with respect to pairs of ambient variables. Data from the power plant UCI dataset. 
% 	 }  		
%  \label{fig:power_PID}
% \end{figure*}

% Fig.~\ref{fig:power_PID} illustrates the result  estimating 
% the PID in a real dataset (from the UCI public repository) of the power output of an electricty plant as a function of five 
% relevant ambient variables~\cite{kaya2012local,tufekci2014prediction}.

% \section{Code}
% Code implementing our PID estimator is available at~\url{https://tinyurl.com/pidneurips}. 

\end{document}